		\newenvironment{sub#1}%
		{\caption@withoptargs\subcaption@minipage}%
		{\endminipage}}%
	\newcommand*\subcaption@minipage[2]{%
		\minipage#1{#2}%
		\setcaptionsubtype\relax}
\newcommand{\institute}[1]{\let\and\\\date{#1}}
\newcommand{\inst}[1]{$^{#1}$}
\newcommand{\keywords}[1]{\par\noindent{\bf KEYWORDS:} #1}
\newtheorem{definition}{Definition}
\newtheorem{proposition}{Proposition}
\newtheorem{example}{Example}
\newtheorem{theorem}{Theorem}
\newtheorem{corollary}{Corollary}
\newcommand{\putBibliography}{
	\bibliography{/home/leshyk/PhD/library.bib}
}
\newcommand{\PS}{\mathbb{P}} 
\newcommand{\K}{\mathbb{K}}
\newcommand{\Lat}{\mathfrak{L}}
\newcommand{\rc}{\mathbb{R}}
\newcommand{\set}[1]{\{#1\}}
\newcommand{\tikznamedpicture}[3][0]{
	\newcommand{#2}[#1]{ \begin{tikzpicture}#3\end{tikzpicture} }
}
\tikznamedpicture{\putProjContrexample}{[
	every node/.style={draw,circle},
	every edge/.style={draw,very thick},
	node distance= 0.5cm and 0.5cm
]
	\node(bot){$\bot$}
		edge[->,dashed,loop, out=-45,in=-90,looseness=6](bot);
	\node(z)[above=of bot]{$Z$}
		edge (bot)
		edge [->,dashed, in=135,out=-135] (bot);
	\node(x) [above left=of z] {$X$}
		edge (z)
		edge[->,dashed,loop,out=-135,in=-90,looseness=6](x);
	\node(y) [above right=of z] {$Y$}
		edge (z)
		edge[->,dashed,loop,out=-45,in=-90,looseness=6](y);
}
\tikznamedpicture{\putProjIncresingContext}{[
	every node/.style={draw,circle},
	every edge/.style={draw,very thick},
	node distance= 10mm and 10mm
]
	\node(bot){$\bot$}
		edge[->,dashed,loop, out=-75,in=-105,looseness=8](bot);
	\node(b)[above=of bot] {$\set{b}$}
		edge (bot)
		edge[->,dashed,loop, out=-30,in=-60,looseness=7](b);
	\node(a)[left=of b] {$\set{a}$}
		edge (bot)
		edge[->,dashed,out=-90,in=-180](bot);
	\node(c)[right=of b] {$\set{c}$}
		edge (bot)
		edge[->,dashed,loop, out=-30,in=-60,looseness=7](c);
	\node(ab)[above=of a] {$\set{a,b}$}
		edge (a)
		edge (b)
		edge[->,dashed,loop, out=-150,in=-120,looseness=6](ab);
	\node(ac)[above=of b] {$\set{a,c}$}
		edge (a)
		edge (c)
		edge[->,dashed,loop, out=-75,in=-105,looseness=6](ac);
	\node(bc)[above=of c] {$\set{b,c}$}
		edge (b)
		edge (c)
		edge[->,dashed,loop, out=-30,in=-60,looseness=6](bc);
	
	\node(g1)[above right=of ab,draw=none]{$g_1$}
		edge [|->,dotted,out = -90, in=100] (ab);
	\node(g2)[above right=of ac,draw=none]{$g_2$}
		edge [|->,dotted,out = -90, in=100] (ac);
	\node(g3)[above right=of bc,draw=none]{$g_3$}
		edge [|->,dotted,out = -90, in=100] (bc);
}
\newcommand{\concept}[2]{$\left(#1;#2\right)$}
\tikznamedpicture{\putIPSLattice}{[
	every node/.style={draw,rectangle,font={\tiny}},
	node distance= 0.5cm and 0.8cm
	]
	\node(bottom){\concept{\emptyset}{\top}};
	\node(c2)[above=of bottom] {\concept{\set{g_2}}{\left<[2,2];[2,2]\right>}}
		edge (bottom);
	\node(c1)[above left=of bottom] {\concept{\set{g_1}}{\left<[1,1];[1,1]\right>}}
		edge (bottom);
	\node(c3)[above right=of bottom] {\concept{\set{g_3}}{\left<[3,3];[2,2]\right>}}
		edge (bottom);
	\node(c12)[above=of c1,xshift=10mm] {\concept{\set{g_1,g_2}}{\left<[1,2];[1,2]\right>}}
		edge (c1)
		edge (c2);
	\node(c23)[above=of c3,xshift=-10mm] {\concept{\set{g_2,g_3}}{\left<[2,3];[2,2]\right>}}
		edge (c2)
		edge (c3);
	\node(c123)[above=of $(c12)!0.5!(c23)$] {\concept{\set{g_1,g_2,g_3}}{\left<[1,3];[1,2]\right>}}
		edge (c12)
		edge (c23);
}
\tikznamedpicture{\putIPSRCLattice}{[
	every node/.style={draw,rectangle,font={}},%,inner sep=1.5mm},
	node distance= 0.5cm and 0.2cm
	]
	\node(bottom){\concept{\emptyset}{\set{a_1,a_2,a_3,a_4,a_5}}};
	\node(c2)[above=of bottom] {\concept{\set{g_2}}{\set{a_3,a_4,a_5}}}
		edge (bottom);
	\node(c1)[above left=of bottom] {\concept{\set{g_1}}{\set{a_1,a_3,a_5}}}
		edge (bottom);
	\node(c3)[above right=of bottom] {\concept{\set{g_3}}{\set{a_2,a_4,a_5}}}
		edge (bottom);
	\node(c12)[above=of c1,xshift=10mm] {\concept{\set{g_1,g_2}}{\set{a_3,a_5}}}
		edge (c1)
		edge (c2);
	\node(c23)[above=of c3,xshift=-10mm] {\concept{\set{g_2,g_3}}{\set{a_4,a_5}}}
		edge (c2)
		edge (c3);
	\node(c123)[above=of $(c12)!0.5!(c23)$] {\concept{\set{g_1,g_2,g_3}}{\set{a_5}}}
		edge (c12)
		edge (c23);
}
\tikznamedpicture{\putFCALattice}{[
	every node/.style={draw,rectangle,font={\bf}},
	node distance= 0.7cm and 0.5cm
	]
	\node(bottom){\concept{}{\set{m_1\+m_2\+m_3\+m_4}}};
	\node(g2g4)[above=of bottom] {\concept{g_2\+g_4}{\set{m_3\+m_4}}}
		edge (bottom);
	\node(g1)[left=of g2g4] {\concept{\set{g_1}}{\set{m_1\+m_4}}}
		edge (bottom);
	\node(g3) [right=of g2g4] {\concept{\set{g_3}}{\set{m_2}}}
		edge (bottom);
	\node (g1g2g4) [above=of $(g1)!0.5!(g2g4)$] {\concept{\set{g_1\+g_2\+g_4}}{\set{m_4}}}
		edge(g1)
		edge(g2g4);
	\node(top) [above=of g1g2g4 -| g2g4] {\concept{\set{g_1\+g_3\+g_2\+g_4}}{}}
		edge(g1g2g4)
		edge(g3);
}
\begin{document}
%%%%%%%%%%%%%%%%%%%%%%%%%%%%%%%%%%%%%%%%%%%%%%%%%%%%%%%%%%%%%%%%%%%%%%%%%%%

\title{
	Revisiting Pattern Structure Projections\footnote{The final publication is available at \href{http://link.springer.com/chapter/10.1007\%2F978-3-319-19545-2_13}{link.springer.com}}
}

\author{
	Aleksey Buzmakov\inst{1,2} \and Sergei O. Kuznetsov\inst{2} \and Amedeo Napoli\inst{1}
}

\institute{
	\inst{1}LORIA (CNRS -- Inria NGE -- U. de Lorraine), Vandœuvre-lès-Nancy, France
	\\
	\inst{2}National Research University Higher School of Economics, Moscow, Russia
	\\
	aleksey.buzmakov@inria.fr, skuznetsov@hse.ru, amedeo.napoli@loria.fr
}

%%%%%%%%%%%%%%%%%%%%%%%%%%%%%%%%%%%%%%%%%%%%%%%%%%%%%%%%%%%%
%
\maketitle
%
% % Set the footnote counter back to zero after the affiliations are printed.
\setcounter{footnote}{0}

\begin{abstract}
	Formal concept analysis (FCA) is a well-founded method for data analysis and has many applications in data mining. Pattern structures is an extension of FCA for dealing with complex data such as sequences or graphs. However the computational complexity of computing with pattern structures is high and projections of pattern structures were introduced for simplifying computation. In this paper we introduce o-projections of pattern structures, a generalization of projections which defines a wider class of projections preserving the properties of the original approach. Moreover, we show that o-projections form a semilattice and we discuss the correspondence between o-projections and the representation contexts of o-projected pattern structures.
	\keywords{formal concept analysis, pattern structures, representation contexts, projections}
\end{abstract}

\section{Introduction}
A significant part of recorded data represents phenomena in a structured way, e.g., a molecule is better represented as a labeled graph than as a set of attributes. Pattern structures are an extension of FCA for dealing with such kind of data~\cite{Ganter1999,Ganter2001,Ganter2004}.
Such a pattern structure is defined by a set of objects, a set of descriptions associated with the set of objects, and a similarity operation on descriptions, matching a pair of descriptions to their common part. For instance, the set of objects can contain molecule names, the set of descriptions contains fragments of molecules, and the similarity operation taking two sets of graphs to a set of maximal common subgraphs. The similarity operation is a semilattice operation on the set of descriptions. It allows one to deal with data (objects and their descriptions) in a similar way as one deals with objects and their intents in standard FCA. Such kind of formalization allows one to describe many types of data, however processing can be computationally very demanding. For example, pattern structures on sets of graphs~\cite{Ganter2001,Ganter2004,Kuznetsov2005} is based on the operation of finding maximal common subgraphs for a set of graphs, which is \#P-hard.

To deal with this complexity and to have a possibility to process most of the data, projections of pattern structures were introduced~\cite{Ganter2001}. Projections are special mathematical functions on the set of descriptions that simplify the descriptions of objects. This approach reduces the number of concepts in the pattern lattice corresponding to a pattern structure. However, it does not impact the computational worst-case complexity of the similarity operation. Moreover, it cannot remove concepts of special kinds from the ``middle'' of the semilattice which can be important in some practical cases, e.g., concepts containing too small graphs can be considered useless but they cannot be removed with projections. For example, in~\cite{Buzmakov2013a} concepts having intents that include short sequences of patient hospitalisations have little sense. Hence, short sequences could be ``removed'' from the intent, but the descriptions of objects, i.e., patients, usually include only one long sequence and should not be changed.

In this paper we introduce \textit{o-projections} of pattern structures, a generalization of projections of pattern structures, that allow one to reduce the computational complexity of similarity operations. They also allow one to remove certain kinds of descriptions in the ``middle'' of the semilattice while the descriptions of the objects can be preserved. By introducing o-projections of pattern structures, we correct also some formal problems of projections of pattern structures, which will be discussed later.

The main difference between o-projections and projections is that in o-projected pattern structures we modify the semilattice of descriptions, while in the case of projected pattern structures we can modify only the descriptions of single objects. It should be noticed that most of the properties of projections are valid for o-projections. However, the relation between representation contexts, a reduction from pattern structures to FCA, and projections is quite different from the relation between representation contexts and o-projections. The introduction and study of this difference is one of the main contributions of this work. In addition we have discovered the fact that the set of o-projections of a pattern structure forms a semilattice. From a practical point of view it allows one to apply a set of independent o-projections, e.g., o-projections obtained from several experts, to a pattern structure.

This work further develops the methodology introduced in~\cite{Buzmakov2013a}, where it was applied for the analysis of sequential pattern structures by introducing projections that remove irrelevant concepts.

The rest of the paper is organized as follows. In Section~\ref{sect:PS} we introduce the definitions of a pattern structure, representation context of a pattern structure, and discuss how one can compute with pattern structures along the lines of FCA.  Section~\ref{sect:projections} introduces projections and o-projections of a pattern structure, defines the partial order on o-projections and shows that this order is a semilattice. At the end of this section the relation between o-projections and representation contexts of o-projected pattern structure is discussed. Finally, we conclude the paper and discuss furture work.

\section{Pattern Structures}\label{sect:PS}
In FCA a formal context $(G,M,I)$, where $G$ is a set of objects, $M$ is a set of attributes, and $I \subseteq G \times M$ is a binary relation between $G$ and $M$, is taken to a concept lattice
$\Lat (G,M,I)$~\cite{Ganter1999}. For non-binary data, such as sequences or graphs, lattices  can be constructed in the same way using  pattern structures~\cite{Ganter2001}.
\begin{definition}\label{def:PS}
	A pattern structure $\PS$ is a triple $(G,(D,\sqcap),\delta)$, where $G, D$ are sets, called the set of objects and the set of descriptions, and $\delta: G \rightarrow D$ maps an object to a description. Respectively, $(D,\sqcap)$ is a meet-semilattice on $D$ w.r.t. $\sqcap$, called similarity operation such that $\delta(G):=\set{\delta(g) \mid g \in G}$ generates a complete subsemilattice $(D_\delta,\sqcap)$ of $(D,\sqcap)$.
\end{definition}

\newcommand{\sqcapb}{\sqcap}

 For illustration, let us represent standard FCA in terms of pattern structures. The set of objects $G$ is preserved, the semilattice of descriptions is $(\wp(M),\cap)$, where $\wp(M)$ denotes the powerset of the set of attributes $M$, a description is a subset of attributes and $\cap$ is the set-theoretic intersection. If $x=\set{a\+b\+c}$ and $y=\set{a\+c\+d}$ then $x \sqcapb y = x \cap y = \set{a\+c}$, and  $\delta: G \rightarrow \wp(M)$ is given by $\delta(g)=\{m \in M \mid (g,m) \in I\}$.

 Note that Definition~\ref{def:PS} has an important partial case where $(D,\sqcap)$ is a complete meet-semilattice. In this case the semilattice $(D_\delta,\sqcap)$ is necessarily complete.
First,  in practical applications one often needs finite lattices, which are always complete. Second, in many practical cases one can easily extend an incomplete semilattice to a complete one by introducing some extra elements. For example, given an incomplete semilattice w.r.t containment order on the interval $(a,b)$, one can add $a$ and $b$ to obtain the interval $[a,b]$, which is a complete semilattice.
In this paper some of the statements hold only for the partial case of $(D,\sqcap)$ being a complete meet-semilattice.

The Galois connection for a pattern structure $(G,(D,\sqcap),\delta)$, relating sets of objects and descriptions, is defined as follows:
\begin{align*}
	A^\diamond &:= \underset{g \in A}{\bigsqcap}\delta(g), &\text{for } A \subseteq G\\
	d^\diamond &:= \{g \in G \mid d \sqsubseteq \delta(g)\}, &\text{for } d \in D
\end{align*}

Given a subset of objects $A$, $A^\diamond$ returns the description which is common to all objects in $A$. Given a description $d$, $d^\diamond$ is the set of all objects whose description subsumes $d$.
The natural partial order (or subsumption order between descriptions) $\sqsubseteq$ on $D$  is defined w.r.t. the similarity operation $\sqcap$:
$c \sqsubseteq d \Leftrightarrow c \sqcap d = c$ (in this case we say that $c$ is subsumed by $d$). In the case of standard FCA the natural partial order corresponds to the set-theoretical inclusion order, i.e., for two sets of attributes $x$ and $y$ $x\sqsubseteq y \Leftrightarrow x \subseteq y$.

\begin{definition}
	A pattern concept of a pattern structure $(G,(D,\sqcap),\delta)$ is a pair $(A,d)$, where $A \subseteq G$ and $d \in D$ such that $A^\diamond = d$ and $d^\diamond = A$; $A$ is called the pattern extent and $d$ is called the pattern intent.
\end{definition}

As in standard FCA, a pattern concept corresponds to the maximal set of objects $A$ whose description subsumes the description $d$, where $d$ is the maximal common description of objects in $A$.
The set of all pattern concepts is partially ordered w.r.t. inclusion of extents or, dually, w.r.t. subsumption of pattern intents within a concept lattice, these two antiisomorphic orders making a lattice, called pattern lattice.

\subsection{Running Example}\label{sect:example}
%

%In some cases pattern structures were shown to provide more efficient computations than the equivalent approach based on FCA.
The authors of~\cite{Kaytoue2011} have used interval pattern structures for gene expression analysis. Let us consider an example of such pattern structures. In Figure~\ref{fig:ex-context} an interval context is shown. It has three objects and two attributes. Every attribute shows the interval of values the attribute can have. If we have two objects, then a numerical attribute can have all values from the interval of this attribute in the first object and from the interval of this attribute of the second object. Consequently, the similarity between two intervals can be defined as a convex hull of the intervals, i.e. $[a,b] \sqcap [c,d]=[\min(a,c),\max(b,d)]$. Then, given two tuples of intervals, the similarity between these tuples is computed as a component-wise similarity between intervals.

\begin{figure}[t]
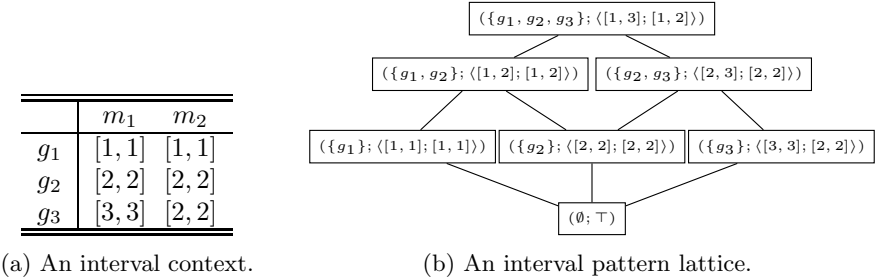

	\centering
	\begin{subfigure}[b]{0.34\columnwidth}
		\centering
		\begin{tabular}{r|c@{~~}c}
			\hline\hline
			& $m_1$ & $m_2$ \\
			\hline
			$g_1$
			& $[1,1]$ & $[1,1]$ \\
			$g_2$
			& $[2,2]$ & $[2,2]$ \\
			$g_3$
			& $[3,3]$ & $[2,2]$ \\
			\hline\hline
		\end{tabular}
		\caption{An interval context.}
		\label{fig:ex-context}
	\end{subfigure}
	\begin{subfigure}[b]{0.65\columnwidth}
		\centering
		\resizebox{\columnwidth}{!}{
			\putIPSLattice
		}
		\caption{An interval pattern lattice.}
		\label{fig:ex-lattice}
	\end{subfigure}
	\caption{An interval pattern structure and the corresponding lattice.}
	\label{fig:ex-PS}
\end{figure}

In this example, we have the pattern structure $(G,(D,\sqcap),\delta)$, where $G=\set{g_1,g_2,g_3}$, the set $D$ is the set of all possible interval pairs with the similarity operation described above, and $\delta$ is given by the context in Figure~\ref{fig:ex-context}, i.e., $\delta(g_1) = \left<[1,1];[1,1]\right>$ and $\delta(g_1) \sqcap \delta(g_2)=\left<[1,2];[1,2]\right>$.

Figure~\ref{fig:ex-lattice} shows the pattern lattice of the interval context in Figure~\ref{fig:ex-context}. One can check that the extents and the intents in this lattice are connected by means of the Galois connection given above. The partial order in the semilattice of intervals is given by ``the smaller the interval, the larger the description with this interval", i.e., the former description gives more certainty about the values than the latter.

\subsection{Representation Context of a Pattern Structure}

Note that any pattern structure can be represented by a formal context with the concept lattice isomorphic to the lattice of the pattern structure. Below we introduce a representation context of a pattern structure and its properties in the line of~\cite{Ganter2001}.

Given a pattern structure $(G,(D,\sqcap),\delta)$, we denote by $D_\delta \subseteq D$ the set of all intents of the concept lattice, i.e., $D_\delta=\set{d \in D \mid (\exists X \subseteq G)\bigsqcap\limits_{g \in X}\delta(g) = d}$. Since $(D_\delta, \sqcap)$ is a complete subsemilattice of $(D,\sqcap)$, for $X \subseteq D$ a join operation $\sqcup$ can be defined as follows: \[
	\bigsqcup X = \bigsqcap \set{d \in D_\delta \mid (\forall x \in X) x \sqsubseteq d}.
\]Given this join operation, $(D_\delta,\sqcap,\sqcup)$ is a complete lattice. We say that a set $M\subseteq D$ is $\sqcup$-dense for $(D_\delta,\sqcap)$ if every element in $D_\delta$ is of the form $\sqcup X$ for some $X \subseteq M$. For example, $M=D_\delta$ is always $\sqcup$-dense for $D_\delta$.

\begin{definition}
	Given a pattern structure $\PS=(G,(D,\sqcap),\delta)$ and a set $M \subseteq D$ $\sqcup$-dense in $D_\delta$, a formal context $(G,M,I)$ is called the representation context of $\PS$, if $I$ is given by $I=\set{(g,m) \in G \times M \mid m \sqsubseteq \delta(g)}$. The representation context of $\PS$ is denoted by $\rc(\PS)$.
\end{definition}

The next theorem establishes a bijection between the pattern concepts in the lattice of pattern structure $\PS$ and the concepts in the lattice of the representation context $\rc(\PS)$. Here, the ideal of element $d \in D$ is denoted by $\downarrow d=\set{e \in D \mid e \sqsubseteq d}$.

\begin{theorem}[Theorem~1 from~\cite{Ganter2001}] \label{thm:representation-context}
	Let $\PS=(G,(D,\sqcap),\delta)$ be a pattern structure and let $\rc(\PS)=(G,M,I)$ be a representation context of $\PS$. Then for any $A \subseteq G$, $B \subseteq M$, and $d \in D$ the following conditions are equivalent:
	\begin{enumerate}
		\item
			$(A,d)$ is a pattern concept of $\PS$ and $B=\downarrow d \cap M$.
		\item
			$(A,B)$ is a formal concept of $\rc(\PS)$ and $d = \bigsqcup B$.
	\end{enumerate}
\end{theorem}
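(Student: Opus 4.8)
The plan is to reduce the theorem to two ``translation identities'' connecting the pattern-structure Galois connection $(\cdot)^\diamond$ with the standard derivation operator $(\cdot)'$ of the formal context $\rc(\PS)=(G,M,I)$ (where, as usual, $A'=\set{m\in M\mid (g,m)\in I\text{ for all }g\in A}$ and $B'=\set{g\in G\mid (g,m)\in I\text{ for all }m\in B}$). Once these are in place, both implications follow by direct substitution. First I would establish the object-side identity: for every $A\subseteq G$,
\[
	A'=\downarrow A^\diamond\cap M .
\]
Unfolding the definition of $I$, an attribute $m\in M$ lies in $A'$ iff $m\sqsubseteq\delta(g)$ for all $g\in A$, i.e.\ iff $m$ is a lower bound of $\set{\delta(g)\mid g\in A}$. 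Since $A^\diamond=\bigsqcap_{g\in A}\delta(g)$ is the greatest lower bound of this family (the meet being computed in the complete subsemilattice $D_\delta$, hence in $D$), this holds exactly when $m\sqsubseteq A^\diamond$, which is the claimed identity.

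Next comes the attribute-side identity: for every $B\subseteq M$, $B'=(\bigsqcup B)^\diamond$. Here I would use that each $\delta(g)=\set{g}^\diamond$ belongs to $D_\delta$, so $\delta(g)$ is an upper bound of $B$ in $D_\delta$ iff $\bigsqcup B\sqsubseteq\delta(g)$; combined with $m\sqsubseteq\bigsqcup B$ for every $m\in B$, this shows that ``$m\sqsubseteq\delta(g)$ for all $m\in B$'' is equivalent to ``$\bigsqcup B\sqsubseteq\delta(g)$'', i.e.\ $g\in B'$ iff $g\in(\bigsqcup B)^\diamond$. I would also record the density fact that for any $d\in D_\delta$ one has $\bigsqcup(\downarrow d\cap M)=d$: the inclusion $\bigsqcup(\downarrow d\cap M)\sqsubseteq d$ is clear since $d$ is an upper bound of $\downarrow d\cap M$, while $\sqcup$-density of $M$ gives $d=\bigsqcup X$ for some $X\subseteq M$, and every such $x$ lies in $\downarrow d\cap M$, whence $d=\bigsqcup X\sqsubseteq\bigsqcup(\downarrow d\cap M)$.

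With these tools the equivalence becomes mechanical. For $1\Rightarrow 2$, assuming $A^\diamond=d$, $d^\diamond=A$ and $B=\downarrow d\cap M$: the object-side identity gives $A'=\downarrow d\cap M=B$; the density fact gives $\bigsqcup B=d$ (note $d=A^\diamond\in D_\delta$); and the attribute-side identity then gives $B'=d^\diamond=A$, so $(A,B)$ is a formal concept with $d=\bigsqcup B$. For $2\Rightarrow 1$, assuming $A'=B$, $B'=A$ and $d=\bigsqcup B$: the attribute-side identity gives $d^\diamond=B'=A$; the object-side identity gives $B=A'=\downarrow A^\diamond\cap M$; applying the density fact to $A^\diamond\in D_\delta$ yields $\bigsqcup B=A^\diamond$, hence $A^\diamond=\bigsqcup B=d$ and therefore $B=\downarrow d\cap M$, so $(A,d)$ is a pattern concept.

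The main obstacle I anticipate is not the bookkeeping but making the two semilattice structures cooperate: the meets behind $(\cdot)^\diamond$ live in $D$ (where they agree with those of $D_\delta$ by completeness), whereas the joins $\bigsqcup$ are defined only relative to $D_\delta$. The crucial use of the $\sqcup$-density hypothesis is concentrated entirely in the density fact, and the point where genuine care is needed is verifying that $A^\diamond$ and each $\delta(g)$ do belong to $D_\delta$, so that the least-upper-bound property of $\bigsqcup$ in $D_\delta$ may legitimately be invoked against them.
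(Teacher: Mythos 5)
Your proof is correct and complete: the two translation identities $A'=\downarrow A^\diamond\cap M$ and $B'=\left(\bigsqcup B\right)^\diamond$, together with the density fact $\bigsqcup(\downarrow d\cap M)=d$ for $d\in D_\delta$, reduce both implications to substitution, and you rightly isolate the only delicate points --- that $A^\diamond$ and each $\delta(g)=\set{g}^\diamond$ belong to $D_\delta$, so that the least-upper-bound property of $\bigsqcup$ (which is defined only relative to $D_\delta$) may be invoked against them. The paper itself states this result as Theorem~1 imported from the cited reference without reproducing a proof, and your argument is essentially the standard one given in that original source, so there is nothing to reconcile.
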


\begin{example}
	A representation context for the pattern structure given in Figure~\ref{fig:ex-PS} can be given by the set $M$ where every element $m \in M$ is of the form $\left<[-\infty,a];[-\infty,+\infty]\right>$ or $\left<[-\infty,+\infty];[b,+\infty]\right>$, and $a,b\in \{1,2,3\}$.
	
	In fact, the element $\left<[-\infty,+\infty];[a,+\infty]\right>$ corresponds to the attribute '$m_2 \geq a$' in the case of the interordinal scaling~\cite{Ganter1999} of numerical data. Another representation context can be constructed from the intents of join-irreducible concepts of the lattice in Figure~\ref{fig:ex-lattice}. These two representation contexts of the pattern structure related to Figure~\ref{fig:ex-PS} are shown in Figures~\ref{fig:ex-RC-interordinal}~and~\ref{fig:ex-RC-another}. It can be seen that the resulting lattices, e.g., the lattice in Figure~\ref{fig:ex-RC-lattice}, are isomorphic to the lattice in Figure~\ref{fig:ex-lattice}.
\end{example}

\begin{figure}[t]
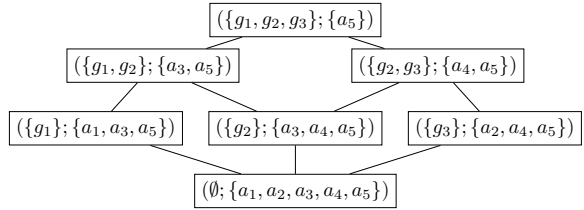

	\centering
	\begin{subfigure}[b]{0.6\columnwidth}
		\centering
		\begin{tabular}{l|cccc|cc|c}
			\hline\hline
			& \rotatebox{90}{\scriptsize$\left<[3,+\infty];[-\infty,+\infty]\right>$}
			& \rotatebox{90}{\scriptsize$\left<[2,+\infty];[-\infty,+\infty]\right>$}
			& \rotatebox{90}{\scriptsize$\left<[-\infty,1];[-\infty,+\infty]\right>$}
			& \rotatebox{90}{\scriptsize$\left<[-\infty,2];[-\infty,+\infty]\right>$}

			& \rotatebox{90}{\scriptsize$\left<[-\infty,+\infty];[2,\infty]\right>$}
			& \rotatebox{90}{\scriptsize$\left<[-\infty,+\infty];[-\infty,1]\right>$}

			& \rotatebox{90}{\scriptsize$\left<[1,3];[1,2]\right>$}
			\\
			\hline
			& \scriptsize $m_1 \geq 3$ & \scriptsize $m_1 \geq 2$ & \scriptsize $m_1 \leq 1$ & \scriptsize $m_1 \leq 2$
			& \scriptsize $m_2 \geq 2$ & \scriptsize $m_2 \leq 1$ 
			& \\
			\hline
			$g_1$
			&   &   & x & x &   & x & x \\
			$g_2$
			&   & x &   & x & x &   & x \\
			$g_3$
			& x & x &   &   & x &   & x \\
			\hline\hline
		\end{tabular}
		\caption{Representation context corresponding to interordinal scaling.}
		\label{fig:ex-RC-interordinal}
	\end{subfigure}
	\hfill
	\begin{subfigure}[b]{0.35\columnwidth}
		\centering
		\begin{tabular}{l|ccccc}
			\hline\hline
			& \rotatebox{90}{\scriptsize$\left<[1,1];[1,1]\right>$}
			& \rotatebox{90}{\scriptsize$\left<[3,3];[2,2]\right>$}
			& \rotatebox{90}{\scriptsize$\left<[1,2];[1,2]\right>$}
			& \rotatebox{90}{\scriptsize$\left<[2,3];[2,2]\right>$}
			& \rotatebox{90}{\scriptsize$\left<[1,3];[1,2]\right>$}
			\\ \hline
			& $a_1$ & $a_2$ & $a_3$ & $a_4$ & $a_5$  \\
			\hline
			$g_1$
			& x &   & x &   & x \\
			$g_2$
			&   &   & x & x & x \\
			$g_3$
			&   & x &   & x & x \\
			\hline\hline
		\end{tabular}
		\caption{Another possible representation context.}
		\label{fig:ex-RC-another}
	\end{subfigure}
	\begin{subfigure}[b]{0.65\columnwidth}
		\centering
		\resizebox{\columnwidth}{!}{
			\putIPSRCLattice
		}
		\caption{A concept lattice for the context if Figure~\ref{fig:ex-RC-another}.}
		\label{fig:ex-RC-lattice}
	\end{subfigure}
	\caption{Possible representation contexts for the pattern structure in Figure~\ref{fig:ex-PS} and the concept lattice for the context in Figure~\ref{fig:ex-RC-another}.}
	\label{fig:ex-RCs}
\end{figure}

It should be noticed that in some cases the representation context is hard to compute. For example, in case of  numerical data with the set of all values $W$, to construct representation context, one needs to create  $2|W| +1$ binary attributes, which can be much more than the number of original real-valued attributes. The authors of~\cite{Kaytoue2011} have shown that pattern structures provide more efficient computations than the equivalent approach based on FCA and scaling, which can be considered as a way to build representation context of interval pattern structures, e.g., see Figure~\ref{fig:ex-RC-interordinal}.

In case of graph data the set of attributes of the representation context consists of all subgraphs of the original graph descriptions, which is hard to compute~\cite{Kuznetsov2005}.

\subsection{Computation of Pattern Lattices}
Nearly any algorithm for computing concept lattices from contexts can be adapted to compute pattern lattices from pattern structures. To adapt an algorithm, every set intersection operation on attributes is replaced by the semilattice operation $\sqcap$ on corresponding patterns, and every subset checking is replaced by the semilattice order  $\sqsubseteq$ checking, in particular, all $(\cdot)'$ operations are replaced by $(\cdot)^\diamond$. For example, let us consider a modified version of Close-by-One (\texttt{CbO}) algorithm~\cite{Kuznetsov1993}.

\begin{algorithm}[t]
	\AlgoDisplayBlockMarkers
	\SetAlgoBlockMarkers{}{}
	\SetAlgoNoEnd
	\SetKwProg{Fn}{Function}{}{}
	\SetKwFunction{CbO}{CloseByOne}
	\SetKwFunction{IsCanonicExt}{IsCanonicExtension}
	\SetKwFunction{Save}{SaveConcept}
	
	\Fn{\CbO{$Ext$, $Int$}}{
		\KwData{$\PS=(G,(D,\sqcap),\delta)$, the extent $Ext$ and the intent $Int$ of a concept. }
		\KwResult{All canonical ancestors of $(Ext,Int)$ in the concept lattice.}
		\ForEach{$S\subseteq G$, $S \succ Ext$}{
			$NewInt \longleftarrow \underset{g \in S}{\pmb{\bigsqcap}}\delta(g)$
			\tcc*[r]{$\sqcap$ - the similarity}
			$NewExt \longleftarrow \{g \in G \mid  NewInt \pmb{\sqsubseteq} \delta(g)\}$
			\tcc*[r]{$\sqsubseteq$ - the subsumption}

			\If{\IsCanonicExt{$Ext$, $NewExt$}}{
				\Save{$(NewExt,NewInt)$}\;
				\CbO{NewExt,NewInt}\;
			}
		}
	}
	\tcc{Looking for all concepts of the concept lattice}
	\CbO{$\emptyset$, $\top$}\;
	\caption{The version of the Close-by-One algorithm computing the pattern lattice of a pattern structure $\PS$.  \label{alg:cbo-ps}}
\end{algorithm}

Algorithm~\ref{alg:cbo-ps} shows the listing of the modified part of \texttt{CbO}. Here the canonical extension \texttt{IsCanonicExtension} and canonical order $\succ$ are defined on the set of objects and hence are the same as in~\cite{Kuznetsov1993}. We can see that only lines 3 and 4 are modified. In these lines the set intersection operation and the subset relation checking are replaced by the corresponding operators of a pattern structure.

\section{Revised Projections of Pattern Structures}\label{sect:projections}
Pattern structures are hard to process due to the large number of pattern concepts in the pattern lattice and the algorithmic complexity of the similarity operation $\sqcap$. Projections of pattern structures ``simplify'' to some degree the computation and allow one to work with ``simpler" descriptions. In fact, a projection can be considered as a mapping for pruning descriptions with certain mathematical properties. These properties ensure that a projection of a semilattice is a semilattice and that the concepts of a projected\footnote{
	We use the expression ``a projected pattern structure'' instead of ``a projection of a pattern structure'' to distinguish between projection as an operator $\psi$ and as the result of applying the operator to a pattern structure.
} pattern structure are related to the concepts of the original pattern structure~\cite{Ganter2001}.

In this section we introduce o-projected pattern structures (``o" coming from ``order"), i.e. a revision of projected pattern structures in accordance with~\cite{Buzmakov2013a}. We discuss the properties of o-projected pattern structures and relate them to the projected pattern structures from~\cite{Ganter2001}. The notion of (o-)projected pattern structure is based on a kernel operator (a projection).

\begin{definition}[\cite{Ganter2001}]\label{def:projection}
	A projection $\psi: D \rightarrow D$ is a kernel (interior) operator on the partial order $(D,\sqsubseteq)$, i.e. it is (1)~monotone ($x \sqsubseteq y \Rightarrow \psi(x) \sqsubseteq \psi(y)$), (2)~contractive ($\psi(x) \sqsubseteq x$) and (3)~idempotent ($\psi(\psi(x))=\psi(x)$).
\end{definition}

Given a projection $\psi$ we say that the fixed point of $\psi$ is the set of all elements from $D$ such that they are mapped to themselves by $\psi$. The fixed point of $\psi$ is denoted by $\psi(D)=\{d \in D \mid \psi(d)=d\}$. Note that, if $\psi(d) \neq d$, then there is no other $\tilde{d}$ such that $\psi(\tilde{d}) = d$ because of idempotency. Hence, any element outside the fixed point of the projection $\psi$ is pruned.

\subsection{Definition of Projected Pattern Structures}

Let us first consider the projected pattern structure w.r.t. a projection $\psi$ according to~\cite{Ganter2001}. Given a pattern structure $\PS=(G,(D,\sqcap),\delta)$ and a projection $\psi$ on $D$, the projected pattern structure is defined as $(G,(D,\sqcap),\psi \circ \delta)$. As we can see, a projection only changes the descriptions of the objects but not the underlying semilattice $(D,\sqcap)$. There are two problems with this definition of the projected pattern structures. First, it is necessary to restrict the class of projections given by Definition~\ref{def:projection} in order to ensure the property $\psi(x\sqcap y) = \psi(x)\sqcap \psi(y)$. Second, the complexity of computing $\sqcap$ can be very high, but with this kind of projected patten structures we cannot decrease the algorithmic complexity. Below we discuss these two points.

In~\cite{Ganter2001} (Proposition~1) the following property of the projection operator is discussed: given a semilattice $(D,\sqcap)$ and a projection $\psi$ on $D$, for any two elements $x$ and $y$ from $D$ one has $\psi(x \sqcap y) = \psi(x) \sqcap \psi(y)$. Let us consider the example in Figure~\ref{fig:semilattice-contrexample} with the meet-semilattice $D=\set{x,y,z,\bot}$ given by its diagram and the projection $\psi$ given by the dotted lines. It is easy to see that $\psi(x\sqcap y)=\bot \neq z=\psi(x) \sqcap \psi(y)$. One way of solving this problem is to give additional conditions on projection $\psi$ that would imply the required property. An important example is the following condition: for all $x,y \in D$ if $x < y$ and $\psi(y)=y$, then $\psi(x)=x$. This kind of solution respects the intuition behind the definition of the projected pattern structure in~\cite{Ganter2001}, according to which the initial descriptions of objects are changed, but the similarity operation $\sqcap$ is not changed.

\begin{figure}[t]
	\centering
	\begin{minipage}{0.45\columnwidth}
		\putProjContrexample
	\end{minipage}
	\qquad
	\begin{minipage}{0.45\columnwidth}
		\[
			D=\set{x,y,z,\bot}
		\]

		\vspace{0,3cm}

		\[\begin{aligned}
				\psi: &x \mapsto x, y \mapsto y,\\
				&z \mapsto \bot, \bot \mapsto \bot
		\end{aligned}\]

		\vspace{0,3cm}

		\[\begin{aligned}
				\psi(x \sqcap y) &= \psi(z) = \bot \neq\\
				&\neq z = \psi(x) \sqcap \psi(y)
		\end{aligned}\]

	\end{minipage}
	\caption{Contrexample to Proposition~1 from~\cite{Ganter2001}.}
	\label{fig:semilattice-contrexample}
\end{figure}

Another way of solving the problem above is to generalize the definition of the projected pattern structure, and we proceed in this way in the next section, by allowing to modify the similarity operation on descriptions.

% BAD EXAMPLE: WE CAN INTRODUCE VARIABLE LABELS OF VERTICES TO DENOTE "AN ARBITRARY CHEMICAL ELEMENT" TO AVOID THIS PROBLEM, SEE {Ganter2001}.
% BU: IF WE DO THAT WE LOOSE THE CONNECTION BETWEEN THE PS WITH NORMAL GRAPH RELATION AND THE NEW GRAPH RELATION WITH "AN ARBITRARY CHEMICAL ELEMENT". WITH A PROJECTION WE DO NOT LOOSE IT, AND THAT IS WHY WE USE PROJECTIONS AT ALL.
%It is useful, for example, in analysis of chemical structures. Chemists usually prefer to have the whole aromatic cycle in a pattern. However, if we modify the initial descriptions of the objects the parts of the aromatic cycles have to be presented in patterns. But if we modify the similarity operation in such a way that the similarity of two patterns with only complete aromatic cycles returns the patterns with only complete aromatic cycles, we can significantly simplify the lattice, and we can lost mainly useless information. It could seem in this example that we can just introduce another pattern structure with different similarity operation but in this case we loose the connection between the original pattern structure and the modified one.

%
\subsection{Definition of o-projected Pattern Structures}

Below we propose a definition of o-projected pattern structures by means of a kernel operator $\psi$. The definition takes into account the problems discussed above.
In the o-projected pattern structure we substitute the semilattice of descriptions by its suborder (the letter ``o'' comes from ``order'') with another similarity operation, which can be different from the initial one.

Let us first note that, given a meet-semilattice $D$ and a kernel operator $\psi$, the fixed point $\psi(D)$ is a semilattice w.r.t. to the natural order on $D$.

\newcommand{\sqcappsi}{\sqcap_\psi}
\newcommand{\bigsqcappsi}{{\bigsqcap}_\psi}

\begin{theorem} \label{thm:psi-sqcap=sqcap-psi}
	Given a semilattice $(D,\sqcap)$ and a kernel operator $\psi$, the fixed point $(\psi(D),\sqcappsi)$ is a semilattice w.r.t. the natural order on $(D,\sqcap)$, i.e., $d_1 \sqsubseteq d_2 \Leftrightarrow d_1 \sqcap d_2=d_1$. If $\bigsqcap X$ exists for a set $X \subseteq D$, then $\underset{x \in X}{\bigsqcappsi} \psi(x)$ exists and is given by
	\begin{equation} \label{eq:psi-sqcap=sqcap-psi}
		\underset{x \in X}{\bigsqcappsi} \psi(x) = \psi(\underset{x \in X}{\bigsqcap}x)
	\end{equation}
\end{theorem}
\begin{proof}
	Let us denote $d=\underset{x \in X}{\bigsqcap}x$. Since $(\forall x \in X) d \sqsubseteq x$, one has $(\forall x \in X) \psi(d) \sqsubseteq \psi(x)$. Let us show that for any $p \in \psi(D)$, i.e. $\psi(p)=p$ such that $(\forall x \in X) p \sqsubseteq \psi(x)$, we have $p \sqsubseteq \psi(d)$, i.e., that $\psi(d) = \underset{x \in X}{\bigsqcappsi}\psi(x)$.

Since $(\forall x \in X) p \sqsubseteq \psi(x)$ then $(\forall x \in X) p \sqsubseteq x$. Since $d=\underset{x \in X}{\bigsqcap}x$, one has $p \sqsubseteq d$. Thus, $p=\psi(p) \sqsubseteq \psi(d)$ and $\psi(d)$ is the minimum of the set $\psi(X)$, i.e. $\psi(D)$ is a semilattice and the Eq.~(\ref{eq:psi-sqcap=sqcap-psi}) holds.
\end{proof}
\begin{corollary}\label{cor:subsemilattice-image}
	Given a complete subsemilattice $\tilde{D}$ of $(D,\sqcap)$ and a kernel operator $\psi$ on $D$, the image of $\tilde{D}$ is a complete subsemilattice $\psi(\tilde{D})$ of the fixed point $(\psi(D),\sqcappsi)$.
\end{corollary}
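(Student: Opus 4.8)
The plan is to reduce the statement directly to Theorem~\ref{thm:psi-sqcap=sqcap-psi}, since that theorem already computes meets in the fixed-point semilattice as images of meets in $D$. First I would record the easy inclusion $\psi(\tilde{D}) \subseteq \psi(D)$: every element of the image of $\psi$ is a fixed point by idempotency, so $\psi(\tilde{D})$ is indeed a subset of the carrier of $(\psi(D),\sqcappsi)$. What then remains is to verify that $\psi(\tilde{D})$ is closed under arbitrary meets taken in $(\psi(D),\sqcappsi)$, which is exactly what being a \emph{complete} subsemilattice requires.

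To that end I would fix an arbitrary family $Y \subseteq \psi(\tilde{D})$ and, for each $y \in Y$, choose a preimage $x_y \in \tilde{D}$ with $\psi(x_y)=y$ (possible precisely because $y$ lies in the image of $\tilde{D}$). Setting $X=\set{x_y \mid y \in Y}\subseteq \tilde{D}$, completeness of $\tilde{D}$ guarantees that $d:=\bigsqcap X$ exists in $D$ and, crucially, lies in $\tilde{D}$. Since $\set{\psi(x)\mid x\in X}=Y$, Theorem~\ref{thm:psi-sqcap=sqcap-psi} now applies and yields that $\bigsqcappsi Y$ exists in $(\psi(D),\sqcappsi)$ and equals $\psi(\bigsqcap X)=\psi(d)$. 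As $d\in\tilde{D}$, we conclude $\bigsqcappsi Y=\psi(d)\in\psi(\tilde{D})$.

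This single computation establishes both required facts at once: the meet of any family drawn from $\psi(\tilde{D})$, formed in the ambient semilattice $(\psi(D),\sqcappsi)$, exists and already belongs to $\psi(\tilde{D})$. Consequently the meet computed inside $\psi(\tilde{D})$ coincides with the one computed in $(\psi(D),\sqcappsi)$, and $\psi(\tilde{D})$ is a complete subsemilattice of the fixed point, as claimed.

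I expect no genuine obstacle here, only two points of bookkeeping. One must note that $\bigsqcappsi Y$ is determined by $Y$ alone, hence independent of the arbitrary choice of preimages $x_y$; uniqueness of meets makes this automatic, so the particular witness $X$ we selected suffices to pin down the value. Secondly, if the top element is to be covered, the degenerate case $Y=\emptyset$ should be handled separately, where $\bigsqcap\emptyset$ is the top of $\tilde{D}$ and its image under $\psi$ is the top of $\psi(\tilde{D})$.
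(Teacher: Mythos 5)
Your proof is correct and follows exactly the route the paper intends: the paper states this corollary without proof as an immediate consequence of Theorem~\ref{thm:psi-sqcap=sqcap-psi}, and your argument---choosing preimages in $\tilde{D}$, using completeness of $\tilde{D}$ to form $\bigsqcap X$ there, and applying Eq.~(\ref{eq:psi-sqcap=sqcap-psi}) to conclude that the meet of any family from $\psi(\tilde{D})$ exists in $(\psi(D),\sqcap_\psi)$ and lands back in $\psi(\tilde{D})$---is precisely that deduction spelled out. Your two bookkeeping remarks (independence from the choice of preimages, and the empty-family case giving the top element) are sound and only make explicit what the paper leaves implicit.
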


%\begin{corollary}\label{cor:Dpsi-semilattice}
	%Given a complete semilattice $(D,\sqcap)$ and a kernel operator $\psi$ on $D$, $\sqcappsi$ is commutative, associative and idempotent.
%\end{corollary}

Since according to Theorem~\ref{thm:psi-sqcap=sqcap-psi} $\psi(D)$ is a semilattice and according to Corollary~\ref{cor:subsemilattice-image} $\psi(D_\delta)$ is a complete semilattice, we can define an o-projected pattern structure as a pattern structure with $\psi(D)$ as a semilattice.

\begin{definition}\label{def:projected-PS}
 Given a pattern structure $\PS=(G,(D,\sqcap),\delta)$ and a kernel operator $\psi$ on $D$, the o-projected pattern structure $\psi(\PS)$ is a pattern structure $(G,(\psi(D),\sqcappsi),\psi \circ \delta)$, where $\psi(D)=\{d\in D \mid \psi(d) = d\}$ and $\forall x,y \in D, x \sqcappsi y := \psi( x \sqcap y )$.
\end{definition}

In the o-projected pattern structure the kernel operator $\psi$ modifies not only the descriptions of the objects, but also the semilattice operation, i.e., the semilattice $(\psi(D), \sqcap_{\psi})$ is not necessarily a subsemilattice of $(D,\sqcap)$ and so it is not always true that $x \sqcap y =x \sqcappsi y$ in $D$.
%However, as it is shown later $\psi(D)$ is also a semilattice.

\begin{example}\label{ex:aggregated-length}
	Let us define an o-projection for the interval pattern structure from Subsection~\ref{sect:example}. Let us suppose that the {\it aggregated size of a pattern}, i.e., the sum of the lengths of the intervals in the pattern, should be less than 2. First, we should define the corresponding kernel operator $\psi:D\rightarrow D$. Thus, if an aggregated length of a pattern $p$ is less than 2, then $\psi(p):=p$, otherwise $\psi(p):=\bot=\left<[-\infty,+\infty];[-\infty,+\infty]\right>$. For instance, $\psi(\left<[1,1];[1,1]\right>)=\left<[1,1];[1,1]\right>$, while $\psi(\left<[1,2];[1,2]\right>)=\left<[-\infty,+\infty];[-\infty,+\infty]\right>$, because it has two intervals of length 1, i.e., the aggregated size is equal to 2.
	
	Let us consider the o-projected interval pattern structure $(G,(\psi(D),\sqcappsi),\psi\circ\delta)$. It is clear that $\psi\circ\delta=\delta$, thus this o-projected interval pattern structure cannot be expressed as a projected pattern structure.
\end{example}

The concepts of a pattern structure and a projected pattern structure are connected through Proposition~\ref{prop:proj-and-concepts}. This proposition can be found in \cite{Ganter2001}, but thanks to Theorem~\ref{thm:psi-sqcap=sqcap-psi}, it is also valid in our case.
\begin{proposition}\label{prop:proj-and-concepts}
	Given a pattern structure $\PS=(G,(D,\sqcap),\delta)$ and a kernel operator $\psi$ on $D$:
	\begin{enumerate}
		\item
			if $A$ is an extent in $\psi(\PS)$, then $A$ is also an extent in $\PS$.
		\item
			if $d$ is an intent in $\PS$, then $\psi(d)$ is also an intent in $\psi(\PS)$.
	\end{enumerate}
\end{proposition}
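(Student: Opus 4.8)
The plan is to prove the two statements essentially by chasing definitions, leaning crucially on Theorem~\ref{thm:psi-sqcap=sqcap-psi} to translate meets in $\psi(\PS)$ back into the original semilattice. Throughout I will write $(\cdot)^\diamond$ for the derivation in $\PS$ and $(\cdot)^{\diamond_\psi}$ for the derivation in $\psi(\PS)=(G,(\psi(D),\sqcappsi),\psi\circ\delta)$, so that $A^{\diamond_\psi}=\bigsqcappsi_{g\in A}\psi(\delta(g))$ and $d^{\diamond_\psi}=\{g\in G\mid d\sqsubseteq\psi(\delta(g))\}$ for $d\in\psi(D)$.

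For part~(1), suppose $A$ is an extent in $\psi(\PS)$, so $A=(A^{\diamond_\psi})^{\diamond_\psi}$. First I would compute the intent $A^{\diamond_\psi}=\bigsqcappsi_{g\in A}\psi(\delta(g))$. By Eq.~(\ref{eq:psi-sqcap=sqcap-psi}) of Theorem~\ref{thm:psi-sqcap=sqcap-psi}, this equals $\psi\bigl(\bigsqcap_{g\in A}\delta(g)\bigr)=\psi(A^\diamond)$. The key step is then to check that $A$ is closed in $\PS$ as well, i.e. that $(A^\diamond)^\diamond=A$. Since $A^\diamond\sqsubseteq\delta(g)$ for every $g\in A$, we always have $A\subseteq(A^\diamond)^\diamond$, so I only need the reverse inclusion. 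Take $g\in(A^\diamond)^\diamond$, meaning $A^\diamond\sqsubseteq\delta(g)$. By monotonicity of $\psi$ this gives $\psi(A^\diamond)\sqsubseteq\psi(\delta(g))$, i.e. $A^{\diamond_\psi}\sqsubseteq\psi(\delta(g))$, which says exactly that $g\in(A^{\diamond_\psi})^{\diamond_\psi}=A$. Hence $(A^\diamond)^\diamond\subseteq A$, so $A$ is an extent of $\PS$.

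For part~(2), suppose $d$ is an intent in $\PS$, so $d=A^\diamond$ for $A=d^\diamond$; equivalently $(d^\diamond)^\diamond=d$. I want to show $\psi(d)$ is an intent in $\psi(\PS)$, i.e. $\bigl((\psi(d))^{\diamond_\psi}\bigr)^{\diamond_\psi}=\psi(d)$. Note $\psi(d)\in\psi(D)$ by idempotency, so this is a legitimate candidate intent. Set $B=(\psi(d))^{\diamond_\psi}=\{g\in G\mid\psi(d)\sqsubseteq\psi(\delta(g))\}$. The natural approach is to identify $B$ with $d^\diamond$. Since $d=A^\diamond\sqsubseteq\delta(g)$ for $g\in A=d^\diamond$ implies $\psi(d)\sqsubseteq\psi(\delta(g))$ by monotonicity, we get $d^\diamond\subseteq B$; conversely, if $\psi(d)\sqsubseteq\psi(\delta(g))$ then, using contractivity $\psi(\delta(g))\sqsubseteq\delta(g)$ together with $\psi(d)=d$ (as $d$ itself should lie in $\psi(D)$ — see the obstacle below), one obtains $d\sqsubseteq\delta(g)$, giving $g\in d^\diamond$. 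Thus $B=d^\diamond=A$, and then $B^{\diamond_\psi}=\psi(A^\diamond)=\psi(d)$ by Eq.~(\ref{eq:psi-sqcap=sqcap-psi}) as in part~(1), which is precisely the desired closure.

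The main obstacle is the step in part~(2) where I used $\psi(d)=d$. An arbitrary intent $d$ of $\PS$ need not be a fixed point of $\psi$, so the inclusion $B\subseteq d^\diamond$ is not immediate; from $\psi(d)\sqsubseteq\psi(\delta(g))\sqsubseteq\delta(g)$ one only recovers $\psi(d)\sqsubseteq\delta(g)$, hence $g\in(\psi(d))^\diamond$, not obviously $g\in d^\diamond$. The clean fix is to argue directly with $\psi(d)$ rather than $d$: show that $B=(\psi(d))^\diamond$ (the derivation of $\psi(d)$ in $\PS$), observe that $\psi(d)$ is itself a fixed point so the contractivity/monotonicity argument closes up, and then invoke part~(1)'s computation $B^{\diamond_\psi}=\psi\bigl(\bigsqcap_{g\in B}\delta(g)\bigr)=\psi(B^\diamond)$. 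It remains to verify $\psi(B^\diamond)=\psi(d)$; this follows because $B=(\psi(d))^\diamond$ forces $B^\diamond\sqsupseteq\psi(d)$ and monotonicity plus idempotency pin the image down to $\psi(d)$. Getting this last identity to close without circularity is where I would spend the most care.
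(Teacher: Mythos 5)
Your proof is correct, and it is essentially the argument the paper relies on: the paper gives no proof of its own for this proposition, deferring to~\cite{Ganter2001} with the remark that Theorem~\ref{thm:psi-sqcap=sqcap-psi} makes the original proof carry over to o-projections, and your combination of Eq.~(\ref{eq:psi-sqcap=sqcap-psi}) with monotonicity, contractivity and idempotency (including your correct self-diagnosis that $B=d^\diamond$ fails when $\psi(d)\neq d$, and the repair via $B=(\psi(d))^\diamond$) is exactly that transfer. The last identity you flagged closes without circularity: since $\psi(d)$ is a lower bound of $\{\delta(g)\mid g\in B\}$ you get $\psi(d)\sqsubseteq B^\diamond$, while $d^\diamond\subseteq B$ and $d$ being an intent give $B^\diamond\sqsubseteq(d^\diamond)^\diamond=d$; applying the monotone, idempotent $\psi$ to $\psi(d)\sqsubseteq B^\diamond\sqsubseteq d$ yields $\psi(d)\sqsubseteq\psi(B^\diamond)\sqsubseteq\psi(d)$, hence $B^{\diamond_\psi}=\psi(B^\diamond)=\psi(d)$ as required.
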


It is easy to see that the other propositions from~\cite{Ganter2001} concerning projected pattern structures hold for the o-projected pattern structures as well. Below we cite Proposition 3 from~\cite{Ganter2001} that relates implications in a pattern structure and those in an o-projected pattern structure. We skip the propositions related to supervised classification with projected pattern structures by means of hypotheses, because it is out of the scope of this paper. However, they are valid in the case of o-projected pattern structures and can be proven with the help of Theorem~\ref{thm:psi-sqcap=sqcap-psi}.

\begin{proposition}[Proposition~3~from~\cite{Ganter2001}]
	Let $a,b \in D$. If $\psi(a) \rightarrow \psi(b)$ and $\psi(b)=b$ then $a \rightarrow b$, where $x \rightarrow y \Leftrightarrow$ for all $g \in G$ $(x \sqsubseteq \delta(g)$ implies $y \sqsubseteq \delta(g))$
\end{proposition}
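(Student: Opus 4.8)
The plan is to unfold the definition of the implication relation and reduce everything to the two defining properties of a kernel operator, namely monotonicity and contractivity, together with the hypothesis $\psi(b)=b$. Concretely, to establish $a \rightarrow b$ I would fix an arbitrary object $g \in G$ satisfying $a \sqsubseteq \delta(g)$ and aim to show $b \sqsubseteq \delta(g)$; once this holds for every such $g$, the implication $a \rightarrow b$ follows by definition.

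First I would push the assumption $a \sqsubseteq \delta(g)$ through the projection. By monotonicity of $\psi$ (Definition~\ref{def:projection}) this yields $\psi(a) \sqsubseteq \psi(\delta(g))$, and by contractivity $\psi(\delta(g)) \sqsubseteq \delta(g)$; chaining the two via transitivity of $\sqsubseteq$ gives $\psi(a) \sqsubseteq \delta(g)$. At this point the hypothesis $\psi(a) \rightarrow \psi(b)$ applies directly: since $\psi(a) \sqsubseteq \delta(g)$, its defining condition forces $\psi(b) \sqsubseteq \delta(g)$. Finally I would invoke the remaining hypothesis $\psi(b)=b$ to rewrite this as $b \sqsubseteq \delta(g)$, which is exactly what was needed, and conclude by the arbitrariness of $g$.

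I do not expect a genuine obstacle here; the argument is a short chase through the kernel-operator axioms. The only point requiring care is bookkeeping about where the subsumption order lives: because Theorem~\ref{thm:psi-sqcap=sqcap-psi} guarantees that the order on $\psi(D)$ is just the restriction of the natural order $\sqsubseteq$ of $(D,\sqcap)$, the implication $\psi(a)\rightarrow\psi(b)$ read in the o-projected structure coincides with the one read in $\PS$, so the same chain of inequalities is valid regardless of which reading one adopts. It is also worth noting where each hypothesis is used: contractivity is what lets the projected premise $\psi(a)\sqsubseteq\delta(g)$ be triggered from $a\sqsubseteq\delta(g)$, while the hypothesis $\psi(b)=b$ is essential at the last step and cannot be dropped, since without it one would only recover $\psi(b)\sqsubseteq\delta(g)$, which need not imply $b\sqsubseteq\delta(g)$ as $\psi$ is merely contractive.
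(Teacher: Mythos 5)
Your argument is correct, and it is worth noting that the paper itself contains no proof of this proposition: it is quoted from~\cite{Ganter2001}, with only the remark that such statements remain valid for o-projections and ``can be proven with the help of Theorem~\ref{thm:psi-sqcap=sqcap-psi}.'' So you are supplying an argument the paper omits, and it is essentially the standard one. Two refinements. First, under the literal reading of the statement (both occurrences of $\rightarrow$ interpreted w.r.t.\ $\delta$, as the displayed definition prescribes), monotonicity is not needed at all: contractivity already gives $\psi(a)\sqsubseteq a\sqsubseteq\delta(g)$, so your detour through $\psi(a)\sqsubseteq\psi(\delta(g))\sqsubseteq\delta(g)$ can be shortened to a one-step chain.

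Second, your side remark about where the hypothesis $\psi(a)\rightarrow\psi(b)$ lives is the one place where the justification is slightly off target. Theorem~\ref{thm:psi-sqcap=sqcap-psi} says the order on $\psi(D)$ is the restriction of $\sqsubseteq$, but that alone does not identify the two readings, because the implication in the o-projected structure quantifies over the descriptions $\psi(\delta(g))$ rather than $\delta(g)$. The identification is nevertheless true, and the precise fact needed uses all three axioms of Definition~\ref{def:projection}: for any fixed point $p=\psi(p)$ and any $d\in D$ one has $p\sqsubseteq d \Leftrightarrow p\sqsubseteq\psi(d)$ (forward: $p=\psi(p)\sqsubseteq\psi(d)$ by monotonicity and idempotency; backward: $p\sqsubseteq\psi(d)\sqsubseteq d$ by contractivity). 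Applying this with $p=\psi(a)$ and $p=\psi(b)$ shows the two readings of $\psi(a)\rightarrow\psi(b)$ coincide --- and the projected reading is the one intended in~\cite{Ganter2001}, where the point of the proposition is that implications computed in the simpler projected structure transfer back to $\PS$. With that repair your proof goes through under either interpretation, and your closing observation that $\psi(b)=b$ cannot be dropped (contractivity only yields $\psi(b)\sqsubseteq\delta(g)$) is exactly right.
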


\subsection{Order of Projections}

In this subsection we limit ourselves to the practically important case when a set of descriptions is a complete semilattice.
We can consider projections as a means of description pruning in $(D,\sqcap)$. Indeed, given a semilattice $(D,\sqcap)$ and a projection $\psi$ on this semilattice, the set $D$ can be divided into two sets $D=\{d \in D \mid \psi(d)=d\} \cup \{d \in D \mid \psi(d) \neq d\}$, i.e., the fixed point of $\psi$ and the rest. It can be seen that the intents of the o-projected pattern structure $\psi((G,(D,\sqcap),\delta))$ are in the fixed point of $\psi$, i.e., all elements of the form $\psi(d)\neq d$ are discarded. We recall that by $\psi(D)=\{d \in D \mid \psi(d)=d)\}$ we denote the fixed point of $\psi$.
\textit{But under which condition do we have that for any $D_1 \subset D_2$ there is a projection $\psi$ of $D_2$ such that $\psi(D_2)=D_1$?} The following theorem gives necessary and sufficient conditions for such a property.

\begin{theorem}\label{thm:suborder-of-a-semilattice}
	Given a complete semilattice $(D,\wedge)$, with the natural order $\leq$, and $D_s\subseteq D$, there is a projection $\psi:D\rightarrow D$ such that $\psi(D)=D_s$, if and only if $\bot \in D_s$ and for any $X \subseteq D_s \subseteq D$, one has $\bigvee X \in D_s$, where $\bot:=\bigwedge D$ and $\bigvee X=\bigwedge\{d \in D \mid (\forall x \in X) d \geq x\}$.
\end{theorem}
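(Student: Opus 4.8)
The plan is to prove the two implications separately, exploiting that the complete meet-semilattice $(D,\wedge)$ is in fact a complete lattice: since every subset has a meet, the set of upper bounds of any $X \subseteq D$ is nonempty (it contains $\top=\bigwedge\emptyset$), so $\bigvee X=\bigwedge\{d \in D \mid (\forall x \in X)\, d \geq x\}$ is genuinely the least upper bound of $X$. I will use this freely, as it guarantees that all the joins appearing below exist.

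For the necessity direction (``only if''), I assume a projection $\psi$ with $\psi(D)=D_s$. First, $\bot \in D_s$: contractivity gives $\psi(\bot) \leq \bot$, and minimality of $\bot$ forces $\psi(\bot)=\bot$, so $\bot$ is a fixed point. Next, to get join-closure, I take any $X \subseteq D_s$ and show $\psi(\bigvee X)=\bigvee X$. Contractivity gives $\psi(\bigvee X) \leq \bigvee X$. For the reverse inequality, each $x \in X$ satisfies $x \leq \bigvee X$, so monotonicity yields $\psi(x) \leq \psi(\bigvee X)$; but $x \in D_s$ means $\psi(x)=x$, so every $x \in X$ lies below $\psi(\bigvee X)$, making $\psi(\bigvee X)$ an upper bound of $X$ and hence $\bigvee X \leq \psi(\bigvee X)$. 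The two inequalities give $\bigvee X \in \psi(D)=D_s$.

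For the sufficiency direction (``if''), I assume $\bot \in D_s$ and $D_s$ closed under arbitrary joins, and I exhibit the operator $\psi(d):=\bigvee\{s \in D_s \mid s \leq d\}$. Well-definedness together with $\psi(d) \in D_s$ follows from the join-closure hypothesis applied to $\{s \in D_s \mid s \leq d\} \subseteq D_s$ (which is nonempty, since $\bot$ belongs to it). Contractivity holds because $d$ is an upper bound of this set, so its join is $\leq d$; monotonicity holds because enlarging $d$ enlarges the set whose join is taken. To pin down the fixed-point set and idempotency at once, I observe that for any $p \in D_s$ we have $p \in \{s \in D_s \mid s \leq p\}$, hence $\psi(p) \geq p$, and with contractivity $\psi(p)=p$; combined with $\psi(d) \in D_s$ for every $d$, this yields $\psi(D)=D_s$ and idempotency $\psi(\psi(d))=\psi(d)$.

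The individual verifications are all short; the only point requiring care is the standing appeal to completeness, which is what makes $\bigvee X$ exist for every $X$ (so the candidate operator is defined on all of $D$) and makes the closure hypothesis exactly the condition forcing the image to equal $D_s$. I expect the main conceptual obstacle to be recognizing this as the kernel-operator dual of the classical correspondence between closure operators and meet-closed subsets containing the top; once that duality is seen, the construction $\psi(d)=\bigvee\{s \in D_s \mid s \leq d\}$ is essentially forced.
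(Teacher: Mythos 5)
Your proof is correct and takes essentially the same route as the paper: necessity via contractivity and monotonicity applied to $\bigvee X$ (you argue directly where the paper argues by contradiction), and sufficiency by building the kernel operator from joins of fixed points below each element. The only difference is cosmetic: you define $\psi(d)=\bigvee\set{s \in D_s \mid s \leq d}$ uniformly, whereas the paper splits into the identity on $D_s$ and $\bigvee S_d$ with $S_d=\set{x \in D_s \mid x < d}$ for $d \notin D_s$; the two definitions coincide, and your uniform version turns the paper's three-case monotonicity check into a one-line set-inclusion argument.
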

\begin{proof}
	\begin{enumerate}
		\item
			Given a projection $\psi$ such  that $\psi(D)=D_s$, $\bot \in D_s$ because of contractivity of $\psi$, i.e., $\psi(\bot)=\bot$.
			Let us suppose that there is a set $X \subseteq D_s$, i.e., $(\forall x \in X)\psi(x)=x$ such that $\psi(\bigvee X) \neq \bigvee X$. Then, $(\forall x \in X)(x < \bigvee X \underset{\text{monotonicity}}{\Rightarrow} x \leq \psi(\bigvee X) \underset{\text{contractivity}}{<} \bigvee X)$.  It is a contradiction, since $\bigvee X$ is the supremum of $X$. Hence for any $X \subseteq D_s$ we have $\psi(\bigvee X)=\bigvee X$.

		\item
			Given $D_s \subseteq D$ such that $\bot \in D_s$ and for any $X \subseteq D_s$, one has $\bigvee X \in D_s$, let us construct the corresponding projection $\psi$. First, $\psi(d \in D_s):=d$ and for all $d \in D \setminus D_s$ we should have $\psi(d) \neq d$.
			For an element $d \in D \setminus D_s$, let us consider the set $S_d=\{x \in D_s \mid x < d\}$, which is not an empty set since $\bot \in D_s$. We know that $\bigvee S_d \in D_s$ and by definition of $\bigvee$ we have $\bigvee S_d < d$. Then we set $\psi(d):=\bigvee S_d$.

			Let us show that the function $\psi$ is a projection of $D$. Idempotency and contractivity are satisfied by the construction of $\psi$. Let us check monotonicity. Let us take any $a,b \in D$ such that $a > b$. Then, if $\psi(a)=a$, then $\psi(a)=a > b \geq \psi(b)$, i.e., the monotonicity holds. If $\psi(a) \neq a$, then $\psi(a) = \bigvee S_a$ by construction. Hence, if $\psi(b)=b$, then $b \in S_a$, i.e., $\psi(a) \geq \psi(b)$. Finally, if $\psi(b) \neq b$, then $S_b \subseteq S_a$, because if $d \in S_b$, i.e., $d < b$, then $d < b < a$, i.e. $d \in S_a$. In this case, $\psi(a)=\bigvee S_a \geq \bigvee S_b=\psi(b)$.
	\end{enumerate}
\end{proof}
\begin{corollary}
	Given a complete semilattice $(D,\wedge)$, with the natural order $\leq$, and a subset $D_s\subseteq D$ such that $\bot \in D_s$ and for any $X \subseteq D_s$, one has $\bigvee X \in D_s$, the poset $(D_s,\leq)$ is a complete semilattice.
\end{corollary}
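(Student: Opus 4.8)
The plan is to read the corollary off as an immediate consequence of the two preceding theorems, since the hypotheses are tailored to exactly match them. By assumption $D_s$ contains $\bot$ and is closed under arbitrary $\bigvee$, which are precisely the necessary and sufficient conditions of Theorem~\ref{thm:suborder-of-a-semilattice}. So the first step is to invoke that theorem to produce a kernel operator $\psi:D\rightarrow D$ whose fixed point is $\psi(D)=D_s$.

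Next I would feed this $\psi$ into Theorem~\ref{thm:psi-sqcap=sqcap-psi}, where the ambient similarity $\sqcap$ is here the meet $\wedge$. Because $(D,\wedge)$ is complete, $\bigwedge X$ exists for every $X\subseteq D$, and the theorem then guarantees that the projected meet exists and satisfies $\underset{x\in X}{\bigsqcappsi}\psi(x)=\psi\!\left(\bigwedge X\right)$. As $X$ ranges over all subsets of $D$, the image $\psi(X)$ ranges over all subsets of $D_s$, so every subset of $D_s$ admits a meet inside $(\psi(D),\sqcappsi)=(D_s,\sqcappsi)$. That is exactly the statement that $(D_s,\leq)$ is a complete meet-semilattice.

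The one point that needs care — and essentially the only content beyond citing the two theorems — is checking that $\sqcappsi$ is genuinely the meet of the \emph{induced} order $\leq$ restricted to $D_s$, and not some auxiliary operation living only on the fixed point. This is already established inside the proof of Theorem~\ref{thm:psi-sqcap=sqcap-psi}, where $\psi(d)$ is shown to be the greatest element of $\psi(D)$ lying below all the $\psi(x)$ with respect to the natural order; hence the value returned by $\sqcappsi$ coincides with the infimum computed in $(D_s,\leq)$, and no new order is introduced. Thus completeness with respect to $\sqcappsi$ is completeness with respect to $\leq$.

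If instead one prefers a self-contained argument avoiding the detour through projections, I would argue directly: for $Y\subseteq D_s$ set $L=\{x\in D_s \mid (\forall y\in Y)\,x\leq y\}$, which is nonempty since $\bot\in L$. Closure under $\bigvee$ gives $\bigvee L\in D_s$, and since each $y\in Y$ is an upper bound of $L$, the meet-of-upper-bounds definition of $\bigvee$ forces $\bigvee L\leq y$; hence $\bigvee L$ is a lower bound of $Y$ belonging to $D_s$, and any other lower bound in $D_s$ lies in $L$ and is therefore $\leq\bigvee L$, so $\bigvee L$ is the infimum of $Y$ in $D_s$. The main obstacle, in either route, is purely conceptual rather than computational: one must recognize that the supremum-closure hypothesis supplies \emph{infima} for free, through the (dual) join-of-lower-bounds construction; the monotonicity and idempotency bookkeeping is routine and already carried out upstream in Theorems~\ref{thm:suborder-of-a-semilattice} and~\ref{thm:psi-sqcap=sqcap-psi}.
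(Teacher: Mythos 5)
Your proposal is correct and follows essentially the same route as the paper, whose entire proof is the two-step invocation you describe: Theorem~\ref{thm:suborder-of-a-semilattice} produces a projection $\psi$ with $\psi(D)=D_s$, and Theorem~\ref{thm:psi-sqcap=sqcap-psi} then yields that $D_s$ is a (complete, since $(D,\wedge)$ is complete) semilattice. Your added checks --- that $\sqcappsi$ really computes infima with respect to the induced order, and the self-contained join-of-lower-bounds construction --- are sound and in fact slightly more careful than the paper's two-line proof, but they do not change the argument.
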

\begin{proof}
	According to Theorem~\ref{thm:suborder-of-a-semilattice} there is a projection $\psi:D \rightarrow D$ such that $\psi(D)=D_s$. Then, according to Theorem~\ref{thm:psi-sqcap=sqcap-psi} $D_s$ is a semilattice.
\end{proof}

Since a projection of $D$ can be considered as a mapping with the fixed point $\psi(D)$, we can introduce an order w.r.t. this fixed point.

\begin{definition}\label{def:projections-order-fixed-point}
	Given a complete semilattice $(D,\sqcap)$ and two projections $\psi_1$ and $\psi_2$ in $D$, we say that $\psi_1 \leq \psi_2$ if $\psi_1(D) \subseteq \psi_2(D)$.	
\end{definition}

However in some cases, it is more convenient to order projections w.r.t. a superposition of projections or their ``generality''.

\begin{definition}
	\label{def:projections-order-superposition}
	Given a complete semilattice $(D, \sqcap)$ and two projections $\psi_1$ and $\psi_2$ in $D$, we say that $\psi_1 \leq \psi_2$ if there is a projection $\psi:\psi_2(D)\rightarrow\psi_2(D)$ such that $\psi_1 = \psi \circ \psi_2$.
\end{definition}

It can be seen that these two definitions yield the same ordering.

\begin{proposition}
	Definitions~\ref{def:projections-order-fixed-point}~and~\ref{def:projections-order-superposition} are equivalent.
\end{proposition}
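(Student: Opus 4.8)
The plan is to prove the two implications separately, establishing that the fixed-point containment of Definition~\ref{def:projections-order-fixed-point} and the factorisation of Definition~\ref{def:projections-order-superposition} are logically equivalent. The easy direction is to show that Definition~\ref{def:projections-order-superposition} implies Definition~\ref{def:projections-order-fixed-point}; the substantive direction is the converse, where I would have to construct the witnessing projection $\psi$ explicitly.

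For the easy direction, suppose there is a projection $\psi:\psi_2(D)\to\psi_2(D)$ with $\psi_1=\psi\circ\psi_2$. Take any $d\in\psi_1(D)$, so $d=\psi_1(d)=\psi(\psi_2(d))$. Since $\psi$ takes values in $\psi_2(D)$, the right-hand side lies in $\psi_2(D)$, whence $d\in\psi_2(D)$. This yields $\psi_1(D)\subseteq\psi_2(D)$ immediately.

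For the converse, assume $\psi_1(D)\subseteq\psi_2(D)$ and take the natural candidate $\psi:=\psi_1|_{\psi_2(D)}$, the restriction of $\psi_1$ to the fixed point of $\psi_2$. First I would check that $\psi$ actually maps $\psi_2(D)$ into itself: for $p\in\psi_2(D)$ we have $\psi(p)=\psi_1(p)\in\psi_1(D)\subseteq\psi_2(D)$. Next, since by Theorem~\ref{thm:psi-sqcap=sqcap-psi} the semilattice $(\psi_2(D),\sqcap_{\psi_2})$ carries the natural order inherited from $(D,\sqcap)$, the monotonicity, contractivity and idempotency of $\psi$ on $\psi_2(D)$ follow directly from the corresponding properties of $\psi_1$ on $D$, so $\psi$ is a kernel operator on $\psi_2(D)$. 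It then remains to verify the factorisation $\psi_1=\psi\circ\psi_2$ on all of $D$, i.e. that $\psi_1(d)=\psi_1(\psi_2(d))$ for every $d\in D$.

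I expect this last identity to be the main obstacle, and the argument I would use splits into two subsumptions. One inclusion is routine: $\psi_2(d)\sqsubseteq d$ by contractivity of $\psi_2$, so monotonicity of $\psi_1$ gives $\psi_1(\psi_2(d))\sqsubseteq\psi_1(d)$. The reverse inclusion is the delicate point and is where the hypothesis $\psi_1(D)\subseteq\psi_2(D)$ is really used: since $\psi_1(d)\in\psi_1(D)\subseteq\psi_2(D)$, the element $\psi_1(d)$ is a fixed point of $\psi_2$; combining $\psi_1(d)\sqsubseteq d$ with monotonicity of $\psi_2$ yields $\psi_1(d)=\psi_2(\psi_1(d))\sqsubseteq\psi_2(d)$, and then monotonicity together with idempotency of $\psi_1$ gives $\psi_1(d)=\psi_1(\psi_1(d))\sqsubseteq\psi_1(\psi_2(d))$. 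Antisymmetry of $\sqsubseteq$ then closes the identity, completing the construction and hence the equivalence.
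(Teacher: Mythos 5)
Your proof is correct and takes essentially the same route as the paper: the easy direction is identical, and for the converse you choose the same witness $\psi=\psi_1|_{\psi_2(D)}$, checking that it maps $\psi_2(D)$ into itself and inherits the kernel-operator properties. In fact your write-up is more complete than the paper's, which only establishes the image equality $\psi_1(\psi_2(D))=\psi_1(D)$ and then asserts the factorisation, whereas you verify the pointwise identity $\psi_1(d)=\psi_1(\psi_2(d))$ --- the step where the hypothesis $\psi_1(D)\subseteq\psi_2(D)$ does the real work --- via the fixed-point argument $\psi_1(d)=\psi_2(\psi_1(d))\sqsubseteq\psi_2(d)$ followed by monotonicity and idempotency of $\psi_1$.
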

\begin{proof}
	\begin{enumerate}
		\item
			Let $\psi_1=\psi\circ\psi_2$. Since $\psi$ is a projection in $\psi_2(D)$, then $\psi_1(D)=\psi(\psi_2(D)) \subseteq \psi_2(D)$.
		\item
			Let $\psi_1(D) \subseteq \psi_2(D)$. Let us denote by $(\cdot)_1$ and $(\cdot)_2$ the operations in $({\psi_1}(D),\sqcap_{\psi_1})$ and $({\psi_2}(D),\sqcap_{\psi_2})$, respectively, and let us denote $D_i=\psi_i(D)$ the fixed points of $\psi_i$, where $i \in \set{1,2}$.
			
			Let us build $\psi: D_2 \rightarrow D_1$ equal to $\psi_1$ in $D_2$, i.e., for all $d \in D_2$ we set $\psi(d):=\psi_1(d)$. Since $\psi_1$ is a projection in $D$, $\psi$ is a projection in $D_2$ (the natural order is the same). Since $D_1$ is the fixed point of $\psi_1$ then $\psi_1(D_2) \subseteq D_1$. However, since $D_1 \subseteq D_2$ and $\psi_1(D_1) = D_1$ then $\psi_1(D_2)=D_1$, i.e., there is a projection $\psi$ such that $\psi_1=\psi\circ\psi_2$.
	\end{enumerate}
\end{proof}

\begin{example}
	Let us return to Example~\ref{ex:aggregated-length}. We change the threshold for the aggregated size. In Example~\ref{ex:aggregated-length} it was set to 2 ($\psi_{al=2}$), but we can change it to 5 ($\psi_{al=5}$) or 10 ($\psi_{al=10}$). The higher the threshold, the more possible descriptions are projected to themselves, i.e., belong to the fixed point of the projection. Thus, we have $\psi_{al=2} \leq \psi_{al=5} \leq \psi_{al=10}$.
\end{example}

Thanks to Proposition~\ref{prop:proj-and-concepts} it can be seen that, given a pattern structure $\PS$, if we have two projections $\psi_1 \leq \psi_2$, then the set of pattern extents of $\psi_1(\PS)$ is a subset of the set of pattern extents of $\psi_2(\PS)$, i.e., the smaller the projection, the smaller the number of concepts in the corresponding projected pattern structure.

Now it can be seen that projections actually form a semilattice with respect to the previously defined order.

\begin{proposition}
	Projections of a complete semilattice $(D,\sqcap)$ ordered by Definition~\ref{def:projections-order-fixed-point}~or~\ref{def:projections-order-superposition} form a semilattice $(\mathbb{F},\wedge)$, where the semilattice operation between  $\psi_1,\psi_2 \in \mathbb{F}$ is given by $\psi_1 \wedge \psi_2 = \psi_3$ iff $\psi_3(D)=\psi_1(D) \cap \psi_2(D)$.
\end{proposition}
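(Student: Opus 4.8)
The plan is to translate the statement about projections into a statement about their fixed points, using the characterization supplied by Theorem~\ref{thm:suborder-of-a-semilattice}. By that theorem, a subset $D_s \subseteq D$ equals $\psi(D)$ for some projection $\psi$ precisely when $\bot \in D_s$ and $D_s$ is closed under arbitrary joins $\bigvee$ computed in $D$. I would first record that a projection is uniquely determined by its fixed point: for a kernel operator $\psi$ with fixed point $D_s$, contractivity, idempotency and monotonicity together force $\psi(d)$ to be the greatest element of $D_s$ below $d$, namely $\bigvee\{p \in D_s \mid p \leq d\}$. Hence the assignment $\psi \mapsto \psi(D)$ is a bijection between $\mathbb{F}$ and the family $\mathcal{C}$ of join-closed subsets of $D$ containing $\bot$, and by Definition~\ref{def:projections-order-fixed-point} this bijection carries the order $\leq$ on $\mathbb{F}$ exactly to set inclusion on $\mathcal{C}$. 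It therefore suffices to show that $\mathcal{C}$ is a meet-semilattice under inclusion with meet given by intersection.

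Next I would check that $\wedge$ is well defined, i.e.\ that the candidate fixed point $D_3 := \psi_1(D) \cap \psi_2(D)$ again lies in $\mathcal{C}$. Since $\psi_i(\bot)=\bot$ by contractivity, we have $\bot \in \psi_1(D) \cap \psi_2(D) = D_3$. For join-closure, take any $X \subseteq D_3$. Then $X \subseteq \psi_1(D)$, so $\bigvee X \in \psi_1(D)$ by the closure condition of Theorem~\ref{thm:suborder-of-a-semilattice} applied to $\psi_1$; likewise $X \subseteq \psi_2(D)$ gives $\bigvee X \in \psi_2(D)$. Here it is essential that $\bigvee X$ denotes a single element of the ambient complete semilattice $(D,\sqcap)$, so that the two memberships refer to the very same element; hence $\bigvee X \in D_3$. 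Thus $D_3 \in \mathcal{C}$, and Theorem~\ref{thm:suborder-of-a-semilattice} yields a unique projection $\psi_3$ with $\psi_3(D) = D_3$, which I take as $\psi_1 \wedge \psi_2$.

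Finally I would verify that $\psi_3$ is the greatest lower bound of $\psi_1$ and $\psi_2$. From $D_3 \subseteq \psi_1(D)$ and $D_3 \subseteq \psi_2(D)$ we obtain $\psi_3 \leq \psi_1$ and $\psi_3 \leq \psi_2$, so $\psi_3$ is a lower bound; and if $\phi \leq \psi_1$ and $\phi \leq \psi_2$, then $\phi(D) \subseteq \psi_1(D)$ and $\phi(D) \subseteq \psi_2(D)$, whence $\phi(D) \subseteq D_3 = \psi_3(D)$, i.e.\ $\phi \leq \psi_3$. Consequently every pair in $\mathbb{F}$ admits an infimum, so $(\mathbb{F},\wedge)$ is a meet-semilattice, and the equivalence of Definitions~\ref{def:projections-order-fixed-point}~and~\ref{def:projections-order-superposition} transports the conclusion to the second ordering. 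I expect the only delicate point to be the join-closure step: one must argue at the level of the ambient semilattice $D$, using completeness so that $\bigvee X$ always exists, rather than computing joins separately inside $\psi_1(D)$ and $\psi_2(D)$, where the induced operations $\sqcap_{\psi_1}$ and $\sqcap_{\psi_2}$ need not agree. Everything else is routine once the order-isomorphism with $\mathcal{C}$ is in place.
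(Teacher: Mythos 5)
Your proof is correct and takes essentially the same route as the paper: form the intersection $D_3=\psi_1(D)\cap\psi_2(D)$ of the fixed points, verify the hypotheses of Theorem~\ref{thm:suborder-of-a-semilattice} (that $\bot\in D_3$ and that $D_3$ is closed under $\bigvee$ computed in the ambient complete semilattice $D$), and invoke that theorem to obtain $\psi_3$ with $\psi_3(D)=D_3$. You are in fact more thorough than the paper, which checks closure only for binary joins, omits the $\bot$ check, and dismisses both the uniqueness of $\psi_3$ and the greatest-lower-bound verification with ``it follows from the definitions''; your explicit bijection $\psi\mapsto\psi(D)$ between projections and join-closed subsets containing $\bot$ supplies exactly those implicit details.
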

\begin{proof}
	It follows from the definitions that if for any $\psi_1$ and $\psi_2$ the projection $\psi_3$ exists, then projections of $D$ form a semilattice. Let us describe the corresponding $\psi_3$.

	Let us denote $D_1=\psi_1(D)$ and $D_2=\psi_2(D)$ and $D_3=D_1 \cap D_2$. Let us suppose that there exist $x,y \in D_3$ such that $x \sqcup y \notin D_3$. But as $D_3 \subseteq D_1$ and $D_3 \subseteq D_2$, then, since $\psi_1$ is a projection of $D$ and $\psi_2$ is a projection of $D$, we have $x \sqcup y \in D_1$ and $x \sqcup y \in D_2$, i.e., $x \sqcup y \in D_1 \cap D_2 =D_3$. Thus, $(\forall x,y \in D_3) x \sqcup y \in D_3$. Then, according to Theorem~\ref{thm:suborder-of-a-semilattice} there is a projection $\psi_3$ such that $\psi_3(D)=D_3$.
\end{proof}

\subsection{Analogue of Theorem II for Revised Projections}
An important question is \textit{how a projection changes the representation context of a pattern structure?}
We limit the discussion of this question for the case when a set of description $D$ is a complete semilattice.
In~\cite{Ganter2001} the authors describe this change by means of Theorem 2. The formulation of this theorem was corrected in \cite{Kaiser2011}. Below we give the corrected version of the theorem.

\begin{theorem}[Theorem~2 from~\cite{Ganter2001}]
	For two pattern structures $(G,(D,\sqcap),\delta_1)$ and	$(G,(D,\sqcap),\delta_2)$ the following statements are equivalent:
	\begin{enumerate}
		\item
			$\delta_2=\psi\circ\delta_1$ for some $\psi$ on $(D,\sqcap)$.
		\item
			$(\forall g \in G)(\delta_2(g) \sqsubseteq \delta_1(g))$ and there is a representation context $(G,M,I)$ of $(G,(D,\sqcap),\delta_1)$ and some $N \subseteq M$ such that $(G,N,I \cap (G \times N))$ is a representation context of $(G,(D,\sqcap),\delta_2)$.
	\end{enumerate}
	\label{thm:ganter2001}
\end{theorem}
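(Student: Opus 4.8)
The plan is to prove the two implications separately, passing between pattern concepts and formal concepts via Theorem~\ref{thm:representation-context} and manufacturing the projection in the converse direction with Theorem~\ref{thm:suborder-of-a-semilattice}; throughout I write $D_{\delta_i}$ for the intents of $(G,(D,\sqcap),\delta_i)$. For $(1)\Rightarrow(2)$, contractivity of $\psi$ immediately gives $\delta_2(g)=\psi(\delta_1(g))\sqsubseteq\delta_1(g)$ for every $g$, which is the order half of statement~2. For the contexts I would pick a $\sqcup$-dense set $N$ for $D_{\delta_2}$ lying inside the fixed point $\psi(D)$, and then enlarge it to a set $M\supseteq N$ that is $\sqcup$-dense for $D_{\delta_1}$ (adjoining elements never destroys $\sqcup$-density), so that $(G,M,I)$ is a representation context of $\delta_1$ with $N\subseteq M$. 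That restricting the incidence to $N$ reproduces the incidence of $\delta_2$ is then immediate: for $m\in N\subseteq\psi(D)$ one has $m=\psi(m)$, hence $m\sqsubseteq\delta_1(g)$ forces $m=\psi(m)\sqsubseteq\psi(\delta_1(g))=\delta_2(g)$ by monotonicity, while the converse follows from $\delta_2(g)\sqsubseteq\delta_1(g)$; thus $(g,m)\in I\Leftrightarrow m\sqsubseteq\delta_2(g)$ on $N$.

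The step I expect to be the real obstacle is producing $N$ inside $\psi(D)$, since this needs every intent of $\delta_2$ to be a fixed point, i.e. $D_{\delta_2}\subseteq\psi(D)$, which is equivalent to $\psi$ commuting with the meets that build intents, $\psi(\bigsqcap_{g\in X}\delta_1(g))=\bigsqcap_{g\in X}\psi(\delta_1(g))$. This is exactly the property that fails for a general kernel operator, as the example around Figure~\ref{fig:semilattice-contrexample} shows: if a join-irreducible intent of $\delta_2$ escapes $\psi(D)$, it is forced into every $\sqcup$-dense $N$ yet its inherited incidence disagrees with that of $\delta_2$. I would therefore run this direction under the restriction ``$x<y$ and $\psi(y)=y$ imply $\psi(x)=x$'' (equivalently, the fixed point of $\psi$ is an order ideal), which forces the commutation above and gives $D_{\delta_2}=\psi(D_{\delta_1})\subseteq\psi(D)$, so that $N=D_{\delta_2}$ works; this is the corrected reading of Proposition~1 of~\cite{Ganter2001} under which the theorem is stated.

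For $(2)\Rightarrow(1)$ I would build $\psi$ explicitly and, pleasantly, need no meet-compatibility. Let $D_s$ be the closure of $\{\delta_2(g)\mid g\in G\}\cup\{\bot\}$ under the join $\bigvee$ of the complete semilattice $D$; since $D_s$ is $\bigvee$-closed and contains $\bot$, Theorem~\ref{thm:suborder-of-a-semilattice} supplies a projection $\psi$ with $\psi(D)=D_s$ and $\psi(d)=\bigvee\{x\in D_s\mid x\sqsubseteq d\}$. The crux is then $\psi\circ\delta_1=\delta_2$, and the key lemma is that the incidence agreement on $N$ together with $\sqcup$-density forces, for all $g,h$, the implication $\delta_2(h)\sqsubseteq\delta_1(g)\Rightarrow\delta_2(h)\sqsubseteq\delta_2(g)$: writing $\delta_2(h)=\bigsqcup\{m\in N\mid m\sqsubseteq\delta_2(h)\}$, each such $m$ satisfies $m\sqsubseteq\delta_1(g)$, hence $m\sqsubseteq\delta_2(g)$ by the restricted incidence, and then $\delta_2(h)\sqsubseteq\delta_2(g)$ because $\delta_2(g)\in D_{\delta_2}$ bounds all these $m$. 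Consequently every generator of $D_s$, and hence every element of $D_s$, that lies below $\delta_1(g)$ already lies below $\delta_2(g)$, so $\psi(\delta_1(g))=\max\{x\in D_s\mid x\sqsubseteq\delta_1(g)\}\sqsubseteq\delta_2(g)$; the pointwise order $\delta_2(g)\sqsubseteq\delta_1(g)$ puts $\delta_2(g)$ itself into this set, realising it as the maximum and giving $\psi(\delta_1(g))=\delta_2(g)$ with no undershoot. This closes the equivalence, the only genuinely delicate point being the meet-compatibility demanded by the forward direction.
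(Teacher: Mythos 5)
Your proposal cannot be matched against a proof in the paper, because the paper gives none: Theorem~\ref{thm:ganter2001} is imported from~\cite{Ganter2001} with Kaiser's corrected formulation~\cite{Kaiser2011}, and the paper's own contribution is the o-projection analogue, Theorem~\ref{thm:revised-theorem-II}. Judged on its own merits, your direction $(2)\Rightarrow(1)$ is correct and complete: taking $D_s$ to be the $\bigvee$-closure of $\{\delta_2(g)\mid g\in G\}\cup\{\bot\}$, invoking Theorem~\ref{thm:suborder-of-a-semilattice}, and then proving $\psi(\delta_1(g))=\delta_2(g)$ via the lemma that $\sqcup$-density of $N$ plus the incidence agreement force $m\sqsubseteq\delta_1(g)\Rightarrow m\sqsubseteq\delta_2(g)$ for $m\in N$ below a $\delta_2$-intent, is sound (it uses the standing assumption of this subsection that $D$ is a complete semilattice, which is legitimate). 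Pleasingly, this is essentially the same device the paper uses to prove part~2 of Theorem~\ref{thm:revised-theorem-II}, where a projection is likewise manufactured from a join-closed family via Theorem~\ref{thm:suborder-of-a-semilattice}; your key lemma plays the role of the paper's bijection $f$ between attributes and extent-generating descriptions.

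For $(1)\Rightarrow(2)$ you prove only a restricted variant, under the hypothesis that $\psi(D)$ is an order ideal, and you should be aware that this is \emph{not} how the theorem is printed: no such hypothesis appears in its statement. That said, your diagnosis of the obstruction is mathematically right, and the literal statement does fail for a bare kernel operator under the paper's definitions. Concretely, extend Figure~\ref{fig:semilattice-contrexample} by a top element (so $D=\set{\bot,z,x,y,\top}$ is complete) and take $G=\set{g_1,g_2,g_3}$ with $\delta_1(g_1)=x$, $\delta_1(g_2)=y$, $\delta_1(g_3)=z$. Then $\delta_2=\psi\circ\delta_1$ sends $g_3$ to $\bot$, yet $z=\delta_2(g_1)\sqcap\delta_2(g_2)\in D_{\delta_2}$; since $\downarrow z=\set{z,\bot}$ and $\bigsqcup\set{\bot}=\bot$ in $D_{\delta_2}$, every $N$ that is $\sqcup$-dense for $D_{\delta_2}$ must contain $z$ itself, but the $z$-column of $I$ is $z^{\diamond_1}=\set{g_1,g_2,g_3}$ whereas a representation context of $\delta_2$ requires $z^{\diamond_2}=\set{g_1,g_2}$ --- so statement~2 fails while statement~1 holds. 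This is exactly your ``join-irreducible intent escaping $\psi(D)$'' scenario made explicit, and under your ideal condition the argument via $N=D_{\delta_2}\subseteq\psi(D)$ and $M=N\cup D_{\delta_1}$ goes through. So the honest summary is: you have proved a corrected theorem, not the quoted one; you should state this as a repair rather than as ``the reading under which the theorem is stated.'' It is also worth noting that the paper's own escape route is different from yours: instead of restricting $\psi$, Theorem~\ref{thm:revised-theorem-II} changes the similarity operation to $x\sqcappsi y=\psi(x\sqcap y)$, so the offending intent $z$ collapses to $\psi(z)$, and the subcontext relation $N\subseteq M$ is replaced by the weaker simplicity preorder $\leq_S$, which is genuinely necessary since the minimal representation context may even grow under an o-projection (Example~\ref{ex:increase-of-RC}).
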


In Theorem~\ref{thm:ganter2001} one compares two pattern structures that differ in mapping functions. However, in the o-projected pattern structures we can modify the lattice structure itself. \textit{How can we adjust the formulation of Theorem~\ref{thm:ganter2001} in such a way that it can be applied to revised projections?} First, we should notice that in a pattern structure and in an o-projected pattern structure the set of objects is preserved. Second, the minimal representation context of a pattern structure can have less attributes than the minimal representation context of an o-projected pattern structure, as shown in Example~\ref{ex:increase-of-RC}.

\begin{example}\label{ex:increase-of-RC}
	Let $M=\{a,b,c\}$ and the description semilattice be $D=(2^M,\cap)$. Let $\psi:2^M \rightarrow 2^M$ be the following mapping: $\psi(\{a\})=\emptyset$ and for any $A \neq \{a\}$ we put $\psi(A)=A$. This projection is visualised in Figure~\ref{fig:proj-increasing-attrs-projection} by dashed arrows. Let us consider the following pattern structure $(\{g_1,g_2,g_3\},(2^M,\cap),\{g_1 \mapsto \{a,b\}, g_2 \mapsto \{a,c\}, g_3 \mapsto \{b,c\}\}$.

	The minimal representation context of this pattern structure contains 3 attributes $M=\{a,b,c\}$, while the minimal representation context of the o-projected pattern structure contains 4 attributes $M_\psi=\{b,c,ab,ac\}$. The corresponding contexts are shown in Figures~\ref{fig:proj-increasing-attrs-rc-ps}~and~\ref{fig:proj-increasing-attrs-rc-projected-ps}.
\end{example}

\begin{figure}[t]
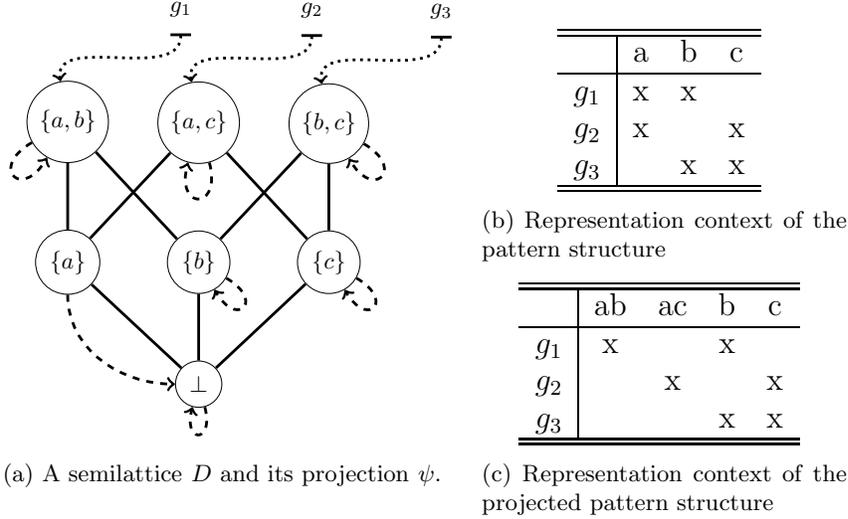

	\centering
	\begin{subfigure}[t]{0.55\columnwidth}
		\resizebox{\columnwidth}{!}{
			\putProjIncresingContext
		}
		\caption{A semilattice $D$ and its projection $\psi$.}
		\label{fig:proj-increasing-attrs-projection}
	\end{subfigure}
	\begin{minipage}[b]{0.4\columnwidth}
		\begin{subfigure}[b]{\columnwidth}
			\centering
			{\large
				\begin{tabular}{l|c@{~~~}c@{~~~}c}
				\hline\hline
				& a & b & c \\
				\hline
				$g_1$
				& x & x &   \\
				$g_2$
				& x &   & x \\
				$g_3$
				&   & x & x \\
				\hline\hline
			\end{tabular}
			}
			\caption{Representation context of the pattern structure}
			\label{fig:proj-increasing-attrs-rc-ps}
		\end{subfigure}

		\vspace{2mm}

		\begin{subfigure}[b]{\columnwidth}
			\centering
			{\large
			\begin{tabular}{l|c@{~~~}c@{~~~}c@{~~~}c}
				\hline\hline
				& ab & ac & b & c \\
				\hline
				$g_1$
				& x  &    & x &   \\
				$g_2$
				&    & x  &   & x \\
				$g_3$
				&    &    & x & x \\
				\hline\hline
			\end{tabular}
			}
			\caption{Representation context of the projected pattern structure}
			\label{fig:proj-increasing-attrs-rc-projected-ps}
		\end{subfigure}
		\vspace*{-14mm}
	\end{minipage}

	\vspace*{2mm}

	\caption{An example of a projection that can increase the number of attributes in the minimal represenation context.}
	\label{fig:proj-increasing-attrs}
\end{figure}

 We can see that to introduce the ``revised Theorem~2'' from~\cite{Ganter2001} we have to define a special relation between contexts.

\begin{definition}\label{def:smaller-context}
	Given two contexts $\K_1=(G,M_1,I_1)$ and $\K_2=(G,M_2,I_2)$, $\K_1$ is said to be simpler than $\K_2$, denoted by $\K_1 \leq_S \K_2$, if
	%the closure system of the attributes in $\K_1$, i.e. the closure system of the set $\{(m_{1,1})^1,\cdots,(m_{1,|M_1|})^1\}$, is contained in the closure system of the attributes in $\K_2$, i.e.
	for any $m_{1,i} \in M_1$ there is a set $B_2\subseteq M_2$ such that $(\{m_{1,i}\})^1=(B_2)^2$. Here by $(\cdot)^1$ and $(\cdot)^2$ we denote the derivation operators in the contexts $\K_1$ and $\K_2$, respectively.
\end{definition}

\begin{example}
	The context in Figure~\ref{fig:proj-increasing-attrs-rc-projected-ps} is smaller w.r.t. Definition~\ref{def:smaller-context} than the context in Figure~\ref{fig:proj-increasing-attrs-rc-ps} because every column of the context in Figure~\ref{fig:proj-increasing-attrs-rc-projected-ps} is the intersection of a subset of columns of the context in Figure~\ref{fig:proj-increasing-attrs-rc-ps}.
\end{example}

This relation between contexts is a preorder. Indeed, it is reflexive, transitive, but not necessarily antisymmetric: given two contexts $\K_1$ and $\K_2$, if $\K_1$ and $\K_2$ have the same closure system of attributes, i.e., the same set of intents in the concept lattice, then according to the definition $\K_1 \leq_S \K_2$ and $\K_1 \geq_S \K_2$. However, we can consider only the context with the minimal number of attributes in the class of equivalence, i.e., the attribute-reduced context.  For simplicity in the rest of the paper we consider only attribute-reduced contexts.

This definition of the simplicity order on contexts can be related to context bonds~\cite{Ganter1999} in the following way. Three formal contexts $\K_i=(G_i,M_i,I_i)$ form a bond if $\K_1 \leq_S \K_2$ and $\K_2^T \leq_S \K_3^T$, where $\K^T=(M,G,I^T)$. Simplicity order can also be considered as a generalization of ``closed-relation-of" order  between contexts:

%recall that for two contexts $(G,M,I)$ and $(G,M,J)$ $J \subset I$ is a closed relation of $(G,M,I)$ if all concepts of $(G,M,J)$ are concepts of $(G,M,I)$ (see Section~3.3 of~\cite{Ganter1999}).

\begin{definition}[Definition~50 from~\cite{Ganter1999}]\label{def:closed-subrelation}
	A binary relation $J \subseteq I$ is called a \textbf{closed relation} of the context $(G,M,I)$ if every concept of the context $(G,M,J)$ is also a concept of $(G,M,I)$.
\end{definition}

From Definitions~\ref{def:smaller-context}~and~\ref{def:closed-subrelation} it can be seen that if $J$ is a closed relation of $(G,M,I)$, then $(G,M,J) \leq_S (G,M,I)$, but not always in the other direction.
The following theorem gives a relation between kernel operators of $D$ and the change in the representation context of o-projected pattern structures.

\begin{theorem} \label{thm:revised-theorem-II}
	Given a pattern structure $\PS=(G,(D,\sqcap),\delta)$ such that $(D,\sqcap)$ is a complete semilattice the following holds:
	\begin{enumerate}
		\item
			for any projection $\psi$ of $D$ we have $\rc(\psi(\PS)) \leq_S \rc(\PS)$.
		\item for any context $\K=(G,M,I)$ such that $\K \leq_S \rc(\PS)$, there is a projection $\psi$ of $D$ such that $\K$ is a representation context of $\psi(\PS)$.
	\end{enumerate}
\end{theorem}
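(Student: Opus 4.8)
The plan is to reduce both directions to a few structural facts about the derivation operator and then, for the harder second part, to build the projection explicitly through its fixed point using Theorem~\ref{thm:suborder-of-a-semilattice}. First I would record three observations. (i)~For any $p\in\psi(D)$ (so $\psi(p)=p$) and any $g\in G$ one has $p\sqsubseteq(\psi\circ\delta)(g)\Leftrightarrow p\sqsubseteq\delta(g)$: the implication ``$\Rightarrow$'' follows from contractivity $\psi(\delta(g))\sqsubseteq\delta(g)$, and ``$\Leftarrow$'' from monotonicity together with $\psi(p)=p$. Hence the extent of an attribute $p\in\psi(D)$ computed in $\psi(\PS)$ equals its extent $p^\diamond$ computed in $\PS$. (ii)~For any $X\subseteq D$ one has $(\bigvee X)^\diamond=\bigcap_{x\in X}x^\diamond$, since $\delta(g)$ is an upper bound of $X$ exactly when $\bigvee X\sqsubseteq\delta(g)$; in particular $(\bigvee X)^\diamond$ is a pattern extent, and the $D$-join, the $D_\delta$-join and the join of $\psi(\PS)$ of one and the same set all carry the same $\diamond$-extent. (iii)~Unfolding Definition~\ref{def:smaller-context}, $\K_1\leq_S\K_2$ holds iff every extent of $\K_1$ is an extent of $\K_2$ (each extent is an intersection of attribute extents, and $G$ is always an extent); moreover, by Theorem~\ref{thm:representation-context} the extents of $\rc(\PS)$ are exactly the pattern extents of $\PS$.

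For statement~1 I would argue purely on extents. By Proposition~\ref{prop:proj-and-concepts}.1 every extent of $\psi(\PS)$ is an extent of $\PS$. Reading this through (iii), every extent of $\rc(\psi(\PS))$ is an extent of $\rc(\PS)$, which is precisely $\rc(\psi(\PS))\leq_S\rc(\PS)$.

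For statement~2 I would construct $\psi$ from $\K=(G,M,I)$, which may be assumed attribute-reduced. By (iii) and $\K\leq_S\rc(\PS)$, the extent $E_m\subseteq G$ of each attribute $m\in M$ in $\K$ is a pattern extent of $\PS$; put $d_m:=E_m^\diamond\in D_\delta$, so that $d_m^\diamond=E_m$. Let $D_s$ be the $\bigvee$-closure in $D$ of $\set{d_m\mid m\in M}\cup\set{\bot}$. This set contains $\bot$ and is closed under $\bigvee$, so Theorem~\ref{thm:suborder-of-a-semilattice} yields a projection $\psi$ with $\psi(D)=D_s$, and from its construction $\psi(d)=\bigvee\set{d_m\mid d_m\sqsubseteq d}$. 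I would then propose $M_\psi:=\set{d_m\mid m\in M}\subseteq\psi(D)$ as the attribute set of a representation context of $\psi(\PS)$.

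The remaining verification is that $(G,M_\psi,I_\psi)$ is a representation context isomorphic to $\K$. By Theorem~\ref{thm:psi-sqcap=sqcap-psi} the intents of $\psi(\PS)$ are $\psi(D_\delta)$, and for $d\in D_\delta$ facts (i) and (ii) give that the extent of $\psi(d)$ is $(\psi(d))^\diamond=\bigcap\set{E_m\mid d_m\sqsubseteq d}$, an intersection of the $E_m$ and hence an extent of $\K$; conversely every extent $\bigcap_{m\in B}E_m$ of $\K$ is the extent of $\psi(d)$ for $d=(\bigcap_{m\in B}E_m)^\diamond$, since then $d\sqsupseteq d_m$ for $m\in B$ while $(\psi(d))^\diamond\supseteq d^\diamond=\bigcap_{m\in B}E_m$. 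Thus $\psi(\PS)$ and $\K$ share the same extents. The same formula yields $\sqcup_\psi$-density: for an intent $e=\psi(d)$ the set $X=\set{d_m\mid d_m\sqsubseteq e}\subseteq M_\psi$ satisfies $\bigvee X=e$, and as every intent dominating $X$ dominates $\bigvee X=e$, the join of $X$ in the intent lattice $\psi(D_\delta)$ is $e$. Finally $d_m\sqsubseteq(\psi\circ\delta)(g)\Leftrightarrow g\in E_m$ by (i), so $I_\psi$ matches $I$ and $\K$ is a representation context of $\psi(\PS)$. The main obstacle is exactly this last step: one must rule out pattern concepts of $\psi(\PS)$ beyond those of $\K$ while keeping $M_\psi$ dense, despite $\sqcap_\psi$ differing from $\sqcap$ and the joins in $D$, in $D_\delta$ and in $\psi(D_\delta)$ not coinciding as elements. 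Fact (ii) is what defuses this, since all of them share the same $\diamond$-extent; choosing $D_s$ to be the $D$-join-closure of the $d_m$ therefore pins the extents of $\psi(\PS)$ down to precisely those of $\K$.
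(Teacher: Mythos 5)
Your proposal is correct and follows essentially the same route as the paper's proof: part~1 via Proposition~\ref{prop:proj-and-concepts} together with the extent characterization of $\leq_S$, and part~2 by pulling each attribute extent of $\K$ back to a pattern $d_m=(m')^\diamond$, taking the join-closure of $\set{d_m \mid m \in M}$ as the fixed point, invoking Theorem~\ref{thm:suborder-of-a-semilattice} to obtain $\psi$, and verifying that $\set{d_m \mid m \in M}$ is $\sqcup$-dense for $\psi(\PS)$ with the incidence relation matching $I$ under the bijection $m \mapsto d_m$. Your fact~(ii), distinguishing the joins in $D$, in $D_\delta$ and in $\psi(D_\delta)$ while noting they share the same $\diamond$-extent, makes explicit a point the paper's proof leaves implicit, but the argument is the same.
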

\begin{proof}
	\begin{enumerate}
		\item
			The first statement follows from the fact that any extent of $\psi(\PS)$ is an extent of $\PS$ (Proposition~\ref{prop:proj-and-concepts}).
		\item
			Given a pattern structure $\PS$ and a context $\K$ such that $\K \leq_S \rc(\PS)$, let us define the set $D_M = \{d \in D \mid (\exists m \in M) (m')^{\diamond}=d\}$ (notice that for $\K$ and $\PS$ there is the same set $G$, thus, given $A \subseteq G$, both $A'$ and $A^\diamond$ are defined in $\K$ and $\PS$ correspondingly). Since $\K \leq_S \rc(\PS)$, $m'$ is an extent of $\PS$. Thus, we can see that there is a bijection between $D_M$ and $M$ given by $m' = d^{\diamond}$. We denote this bijection by $f(m)=d$, i.e. $f(m)=d \Leftrightarrow m'=d^\diamond$. Correspondingly, given a subset $N \subseteq M$, we denote by $f(N)=\{d \in D_M \mid f^{-1}(d) \in N \}$, i.e., $f(M)=D_M$.
			
			Let us define  $D_\psi = \{d \in D \mid (\exists X \subseteq D_M) \bigsqcup X = d\}$. According to Theorem~\ref{thm:suborder-of-a-semilattice} there is a projection $\psi$ such that $D_\psi=\psi(D)$.
			
			Let us consider the o-projected pattern structure $\psi(\PS)$. The set $D_M$ is $\sqcup$-dense for $\psi(D)$, i.e., the context $(G,D_M,I_{D_M})$, where $(g,d) \in I_{D_M} \Leftrightarrow \psi\circ\delta(g) \sqsupseteq d$, is a representation context of $\psi(\PS)$. There is the bijection between $D_M$ and $M$. Let us show that the relation $I$ is similar to the relation $I_M$, i.e., $(g,m) \in I \Leftrightarrow (g,f(m)) \in I_{D_M}$.

			It can be seen that for all $g \in G$ and all $d \in f(g')$, we get $\psi\circ\delta(g) \sqsupseteq d$, because for any $d \in f(g')$ we have $g \in d^\diamond$. Moreover, for any $\tilde{d} \in D \setminus f(g')$ we have $d \not\sqsupseteq\psi\circ\delta(g)$. Thus, the context $\K$ and the context $(G,D_M,I_{D_M})$ are similar, and hence for any context $\K \leq_S \rc(\PS)$ there is a projection such that $\K$ is a representation context of $\psi(\PS)$.
	\end{enumerate}
\end{proof}

\section{Conclusion}
In this paper we have introduced o-projections of pattern structures that are based on kernel operators $\psi: D \rightarrow D$. O-projections are a generalization of projections of pattern structures and allow one to change the semilattice of descriptions in o-projected pattern structures. Thus, the complexity of similarity (semilattice) operation can be reduced. Moreover, O-projections also correct a formal problem of projections.

We have shown that o-projections form a semilattice. This can be important when several independent o-projections are applied to a pattern structure. For example, if projections are discussed with several experts it may happen that several types of projections should be combined. In the case of several independent projections we know that there is the only one o-projection w.r.t. the semilattice of o-projections that is a combination of these projections.

Finally, we have shown that the representation context of an o-projected pattern structure can have more attributes than the representation context of the pattern structure itself. To describe this change in the representation context after o-projection we have introduced a new order on contexts, with the use of which we have described the way the representation context can change.

An important direction of the future work is to formalize \textit{transformations} of pattern structures, i.e., special homomorphisms between the semilattice of descriptions $D$ and a different semilattice $D_1$. In particular, it allows one to formalize the mappings of the form $\psi:D \rightarrow \mathbb{R}$, an instance of which are kernel functions used in Support Vector Machines (SVM).

\vspace{3mm}
{\noindent
\textbf{Acknowledgments:}
this research was supported by the Basic Research Program at the National Research University Higher School of Economics (Moscow, Russia) and by the BioIntelligence project (France). The second author was also supported by a grant from Russian Foundation for Basic Research, grant no. 13-0700504.
}

\bibliographystyle{plain}
\putBibliography

\end{document}